\begin{document}

\allowdisplaybreaks

\renewcommand{\thefootnote}{$\star$}

\renewcommand{\PaperNumber}{046}

\FirstPageHeading

\ShortArticleName{Noncommutative Nonlinear Schr\"odinger Equation}

\ArticleName{The Scattering Problem for a Noncommutative \\ Nonlinear
Schr\"odinger Equation\footnote{This paper is a
contribution to the Special Issue ``Noncommutative Spaces and Fields''. The
full collection is available at
\href{http://www.emis.de/journals/SIGMA/noncommutative.html}{http://www.emis.de/journals/SIGMA/noncommutative.html}}}

\Author{Bergfinnur DURHUUS~$^\dag$ and Victor GAYRAL~$^\ddag$}

\AuthorNameForHeading{B.~Durhuus and V.~Gayral}

\Address{$^\dag$~Department of Mathematics, Copenhagen University,\\
\hphantom{$^\dag$}{}~Universitetsparken 5,
DK-2100 Copenhagen {\O}, Denmark}
\EmailD{\href{mailto:durhuus@math.ku.dk}{durhuus@math.ku.dk}}

\Address{$^\ddag$~Laboratoire de Math\'ematiques, Universit\'e de Reims Champagne-Ardenne,\\
\hphantom{$^\ddag$}{}~Moulin de la Housse - BP 1039 51687 Reims cedex 2, France}
\EmailD{\href{mailto:victor.gayral@univ-reims.fr}{victor.gayral@univ-reims.fr}}

\ArticleDates{Received March 03, 2010, in f\/inal form May 20, 2010;  Published online June 03, 2010}

\Abstract{We investigate scattering properties of a Moyal deformed version of the
nonlinear Schr{\"o}dinger equation in an even number of space dimensions.
With rather weak conditions on the degree of nonlinearity, the
Cauchy problem for general initial data has a unique globally def\/ined
solution, and also has solitary wave solutions if the interaction  potential
is suitably chosen.
We demonstrate how to set up a scattering framework for equations of
this type, including appropriate decay estimates of the free time evolution
and the construction of wave operators def\/ined for small scattering
data in the general case and for arbitrary scattering data in the
rotationally symmetric case.}

\Keywords{noncommutative geometry; nonlinear wave equations; scattering theory; Jacobi polynomials}

\Classification{35K99; 58B34; 53D55}

\section{Introduction}

The scattering problem for general nonlinear
f\/ield equations has been intensively studied for many years with
considerable progress, a seminal result being the establishment of
asymptotic completeness in energy space for the
nonlinear Schr{\"o}dinger equation in space dimension \mbox{$n\geq 3$} with
interaction $|\varphi|^{p-1}\varphi$, where
$1+4/n\leq p\leq 1+4/(n-2)$, and analogous results for the nonlinear Klein--Gordon
equation~\cite{GiVe}. Still, many important questions remain
unanswered, in particular relating to the existence of wave operators
def\/ined on appropriate subspaces of scattering states and to
asymptotic completeness for more general interactions, although many
partial results have been obtained. We refer to~\cite{Strauss,Ginibre} for an overview.
One source of dif\/f\/iculties encountered can be traced back to the
singular nature of pointwise
multiplication of functions with respect to appropriate Lebesgue or
Sobolev norms. It therefore appears natural to look for f\/ield equations where
this part of the problem can be eliminated while preserving as much
as possible of the remaining structure. One such possibility
is to deform the standard product appropriately, and  our
purpose  in the present article is to exploit this idea.

Deformations of the kind mentioned applied to relativistic f\/ield
equations have turned out to be of relevance for understanding
non-perturbative aspects of string theory, see e.g.~\cite{GMS,HKL,DJN}.
We shall limit ourselves to studying a deformed version of the
nonlinear Schr{\"o}dinger equation in an even number $n=2d$ of space
dimensions. The rather crude Hilbert space
techniques~\cite{Reed,RS} we apply allow us to consider polynomial
interactions only. We hope that an application of more modern techniques
\cite{Caz} will allow a treatment of a larger class of interactions.
Our main purpose will be to demonstrate that, for
the deformed equation,  the Cauchy problem is in a certain sense more
regular as compared to the
classical equation as well as to set up a natural scattering framework,
 including appropriate decay estimates of the free time evolution
and construction of wave operators under rather mild conditions on the
interaction.

The deformed, or
noncommutative, version of the nonlinear Schr{\"o}dinger
equation (NCNLS) in $2d$ space dimensions can be written as
\begin{gather}
\label{NLSch}
\big(i\partial_t-{\Delta}\big)\varphi(x,t) =
\varphi\,\star_\theta F_{\star_\theta}(\varphi^*\star_\theta\varphi)(x,t) ,
\end{gather}
where $\star_\theta$ denotes the Moyal product (see below) of functions of
$2d$ space variables $x$, $F_{\star_\theta}$ denotes a real polynomial
with respect to this product and ${\Delta}=-\sum_{i=1}^{2d}\partial_i^2$ is
the standard Laplacian in~$2d$ variables.
The Moyal product considered here is given by
\begin{gather*}
%\label{mop}
f\star_\theta g(x):=(2\pi)^{-2d}\int e^{-iyz} f\left(x-\frac\Theta2
y\right)  g(x+z)\,d^{2d} y\,d^{2d} z .
\end{gather*}
The constant skew-symmetric $(2d\times2d)$-matrix $\Theta$ is assumed to
be given in the canonical form
\[
\Theta  =  \theta \begin{pmatrix} 0 & I_d\\ -I_d & 0\end{pmatrix} ,
\]
where $I_d$ denotes the $d\times d$ identity matrix and
$\theta>0$ is called the deformation parameter.

Def\/ining
$\varphi_\theta(x,t) = \varphi(\theta^\frac 12 x,\theta t)$,
we have the scaling identity
\[
(\varphi\star_\theta \psi)_{\theta} = \varphi_{\theta}\star_1 \psi_{\theta}.
\]
It follows that $\varphi$ satisf\/ies (\ref{NLSch}) if and only if
\begin{gather*}
%\label{NLSch1}
\big(i\partial_t-\Delta\big)\varphi_\theta(x,t) =
\theta \varphi_\theta \star F_\star(\varphi_\theta^*\star\varphi_\theta)(x,t) ,
\end{gather*}
where we have dropped the subscript on the $\star$-product when $\theta =1$.

The $\star$-product is intimately connected to the so-called Weyl
quantization map $W$. This map associates an operator
$W(f)$ on $L^2(\mathbb R^d)$ to appropriate functions (or distributions) $f$ on $\mathbb R^{2d}$
such that the kernel $K_W(f)$ of $W(f)$ is given by
\begin{gather*}
%\label{W}
K_W(f)(x,y)=(2\pi)^{-d}\int_{\mathbb
R^d}f\left(\frac{x+y}{2},p\right)e^{i(x-y)\cdot p}\,dp  .
\end{gather*}
 It is easily seen that $W$ is an
isomorphism from $L^2(\mathbb R^{2d})$ onto the Hilbert space $\cal H$ of
Hilbert--Schmidt operators on $L^2(\mathbb R^d)$ fulf\/illing
\begin{gather}
\label{Wunitary}
\Vert W(f)\Vert_2 = (2\pi)^{-d/2}\Vert f\Vert_{L^2(\mathbb R^{2d})} ,
\end{gather}
where $\Vert\cdot\Vert_2$ denotes the Hilbert--Schmidt norm. Moreover,
$W$ maps the space ${\cal S}(\mathbb R^{2d})$ onto the space of
operators whose kernel is a Schwartz function and its relation to the
$\star$-product is exhibited by the identity
\[
W(f\star g) = W(f)W(g) .
\]
Further properties of the $\star$-product can be found in
e.g.~\cite{GGBISV}.

By use of $W$, equation~\eqref{NLSch} can be restated (see also \cite{DJN})
as a dif\/ferential equation
\begin{gather}
\label{NLSch2}
i\partial_t\phi - 2 \sum_{k=1}^d[a_k^*,[a_k,\phi]] =\theta \phi F(\phi^*\phi) ,
\end{gather}
 for  the operator-valued function $\phi(t) := W(\varphi_\theta(\cdot,t))$,
where we have introduced the creation and annihilation operators
\[
a_k =\frac{1}{\sqrt 2}(x_k + \partial_k)\qquad\mbox{and}\qquad a_k^* =
\frac{1}{\sqrt 2}(x_k - \partial_k) .
\]
The operator
\[
{\bf \Delta} = 2 \sum_{k=1}^d \mbox{ad}\, a_k^*\;\mbox{ad}\, a_k,
\]
with domain $D({\bf\Delta})$,
is def\/ined in a natural way as a self-adjoint operator on $\cal H$ by the
relations
\begin{gather}
\label{tria}
{\bf \Delta} =  W\Delta W^{-1} ,\qquad D({\bf \Delta})=WD(\Delta) ,
\end{gather}
where $\Delta$ denotes the standard self-adjoint $2d$-dimensional Laplace
operator with maximal domain $D(\Delta)=H_2^2(\mathbb R^{2d})$.

We shall primarily be interested in globally def\/ined \emph{mild}
solutions to the Cauchy problem associated to
equation (\ref{NLSch2}), that is continuous solutions $\phi:\mathbb
R\to{\cal H}$ to the corresponding integral equation
\begin{gather}
\label{NLSch3}
\phi(t) = e^{-i(t-t_0){\bf \Delta}}\phi_0 - i\int_{t_0}^t e^{i(s-t){\bf \Delta}}
\phi(s) F\big(|\phi(s)|^2\big)\,ds .
\end{gather}
This equation is weaker than (\ref{NLSch2}) in the sense that if
$\phi: I\to D({\bf \Delta})$ is a continuously dif\/ferentiable solution to
(\ref{NLSch2}) def\/ined on an interval $I$ containing $t_0$, i.e.\ a
\emph{strong solution} on $I$, then it also fulf\/ills (\ref{NLSch3}).
This latter equation  naturally f\/its into
the standard Hilbert space framework for evolution equations, see
e.g.~\cite{RS,Reed}. The following theorem is proven in Section~\ref{secth1}
below.

\begin{theorem}\label{th1} Let $F$ be an arbitrary polynomial over
  $\mathbb R$.
\begin{itemize}\itemsep=0pt
\item[$a)$]  For every $\phi_0\in{\cal H}$, the
equation \eqref{NLSch3} has a unique continuous
solution $\phi:\mathbb R\to {\cal H}$.  For every $\phi_0\in D({\bf\Delta})$, the
equation \eqref{NLSch2} has a unique strong
solution $\phi:\mathbb R\to D({\bf\Delta})$ such that $\phi(t_0)=\phi_0$.
\item[$b)$]  Assume $F$ is a polynomial over $\mathbb R$ with positive
highest order coefficient that has a unique local minimum $x_0$ on the
positive  real line and that $F(x_0)< F(0)$.
If $\theta$ is large enough, there exist oscillating solutions to
\eqref{NLSch2} of the form
\[
\phi(t) = e^{i\omega(t_0-t)}\phi_0 ,\qquad t\in\mathbb R ,
\]
for suitably chosen initial data $\phi_0\in D({\bf\Delta})$ and frequency
$\omega\in\mathbb R$.
\end{itemize}
\end{theorem}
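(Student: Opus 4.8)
The plan is to handle part $a)$ by the standard Hilbert-space fixed-point method reinforced by an $L^2$ conservation law, and part $b)$ by reducing the oscillating ansatz to a stationary nonlinear eigenvalue equation that is then solved variationally in the rotationally symmetric (diagonal) sector. For part $a)$ I would first prove local well-posedness of \eqref{NLSch3} by Banach's fixed point theorem. The essential point is that the nonlinearity $\mathcal N(\phi):=\phi\,F(\phi^*\phi)$ maps $\cal H$ into itself and is locally Lipschitz: since $\phi^*\phi$ is a positive trace-class operator whose operator norm is dominated by its Hilbert--Schmidt norm, $\|\phi^*\phi\|\le\|\phi\|_2^2$, the functional calculus gives $\|F(\phi^*\phi)\|\le\sum_k|f_k|\,\|\phi\|_2^{2k}$ for $F(x)=\sum_k f_kx^k$, so that $\|\mathcal N(\phi)\|_2\le\|\phi\|_2\,\|F(\phi^*\phi)\|$ is polynomially bounded, and a routine telescoping estimate controls $\|\mathcal N(\phi)-\mathcal N(\psi)\|_2$ by $\|\phi-\psi\|_2$ times a constant depending only on $\max(\|\phi\|_2,\|\psi\|_2)$. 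As $e^{-it{\bf\Delta}}$ is unitary on $\cal H$, the Duhamel map is then a contraction on $C([t_0-\tau,t_0+\tau],\cal H)$ with $\tau$ depending only on $\|\phi_0\|_2$, yielding a unique local mild solution.

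The decisive step for global existence is the conservation of $\|\phi(t)\|_2$. Differentiating $\|\phi\|_2^2=\mathrm{Tr}(\phi^*\phi)$ along $\dot\phi=-i{\bf\Delta}\phi-i\theta\,\mathcal N(\phi)$ gives $\tfrac{d}{dt}\|\phi\|_2^2=2\,\mathrm{Re}\big[-i\langle\phi,{\bf\Delta}\phi\rangle-i\theta\,\mathrm{Tr}(\phi^*\phi\,F(\phi^*\phi))\big]$, and both traces are real, the first because ${\bf\Delta}$ is self-adjoint and the second because $\phi^*\phi\,F(\phi^*\phi)$ is a product of commuting self-adjoint functions of $\phi^*\phi$; hence the derivative vanishes. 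Since the local existence time depends only on the conserved quantity $\|\phi_0\|_2$, iteration extends the solution to all of $\mathbb R$. For $\phi_0\in D({\bf\Delta})$ the standard regularity theory for semilinear evolution equations (e.g.\ \cite{RS,Reed}), using that $e^{-it{\bf\Delta}}$ preserves $D({\bf\Delta})$ together with the local Lipschitz bounds, upgrades the mild solution to a strong, continuously differentiable one; a Gronwall estimate keeps the graph norm finite on bounded intervals, so this strong solution is again global.

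For part $b)$, inserting $\phi(t)=e^{i\omega(t_0-t)}\phi_0$ into \eqref{NLSch2} makes the phase cancel in $\phi^*\phi=\phi_0^*\phi_0$, reducing everything to the stationary equation ${\bf\Delta}\phi_0+\theta\,\phi_0F(\phi_0^*\phi_0)=\omega\phi_0$; pairing with $\phi_0$ shows at once that $\omega\in\mathbb R$. I would look for solutions in the rotationally symmetric sector, i.e.\ operators $\phi_0=\sum_n c_n\,|h_n\rangle\langle h_n|$ diagonal in the Hermite basis of $L^2(\mathbb R^d)$. A direct commutator computation shows that ${\bf\Delta}$ preserves this sector and acts there as a weighted discrete Laplacian, with $\langle\phi_0,{\bf\Delta}\phi_0\rangle=2\sum_{n\ge1}n\,|c_n-c_{n-1}|^2$, while the nonlinearity acts modewise as $\phi_0F(\phi_0^*\phi_0)=\sum_n c_nF(|c_n|^2)\,|h_n\rangle\langle h_n|$. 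The stationary equation is then exactly the Euler--Lagrange equation, with $\omega$ the Lagrange multiplier, of the energy $E[c]=2\sum_{n\ge1}n\,|c_n-c_{n-1}|^2+\theta\sum_n G(|c_n|^2)$ on the sphere $\sum_n|c_n|^2=\rho$, where $G(x)=\int_0^xF$.

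Finally I would solve this constrained minimization by the direct method. The positive leading coefficient of $F$ makes $G$ coercive at infinity and forces a linear lower bound $G(x)\ge-\alpha x$, so $E$ is bounded below on the sphere. The hypotheses on $F$ serve to exclude loss of compactness: comparing a concentrated ``block'' configuration, with $|c_n|^2\approx x_0$ on the lowest $\rho/x_0$ modes and energy $\approx 2\rho+\theta\rho\,G(x_0)/x_0$, against spread-out configurations, whose energy tends to $2\rho+\theta\rho\,F(0)$, shows that once $\theta$ is large enough the assumption $F(x_0)<F(0)$ pushes $\inf E$ strictly below the vanishing threshold. A concentration-compactness argument then forces minimizing sequences to be tight in mode space, producing a nonzero minimizer $c^\ast$; the weighted-gradient term in $E$ yields rapid decay of $c^\ast_n$, so $\phi_0=\sum_n c^\ast_n|h_n\rangle\langle h_n|\in D({\bf\Delta})$ and $\phi(t)=e^{i\omega(t_0-t)}\phi_0$ is the desired oscillating solution. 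I expect this last compactness step---verifying that the conditions on $F$ and large $\theta$ really do drop the infimum below the level at which mass could escape to high or spread-out modes---to be the main obstacle.
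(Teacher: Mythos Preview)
Your treatment of part~$a)$ follows the paper's route, but two technical points deserve emphasis. First, the conservation law $\tfrac{d}{dt}\|\phi\|_2^2=0$ is only directly available for \emph{strong} solutions, so the logical order in the paper is: prove local strong solutions, establish conservation for them, prove global strong solutions, and only then deduce global mild solutions by approximating $\phi_0\in{\cal H}$ by data in $D({\bf\Delta})$ (via Corollary~2 to Theorem~14 of \cite{Reed}). Your outline invokes conservation to globalize the mild solution before upgrading to strong solutions, which is circular as stated. Second, your ``Gronwall estimate keeps the graph norm finite'' hides the key inequality $\|{\bf\Delta}(\phi F(\phi^*\phi))\|_2\le C(\|\phi\|_2)\,\|{\bf\Delta}\phi\|_2$, with the constant depending only on $\|\phi\|_2$. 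This linearity in $\|{\bf\Delta}\phi\|_2$ is not automatic: expanding ${\bf\Delta}$ on a product via the derivations $\mathrm{ad}\,a_k$, $\mathrm{ad}\,a_k^*$ produces cross terms such as $[a_k,\phi]\,[a_k^*,\phi]\cdots$, and the paper controls these through the interpolation bound $\|[a_k,\phi]\|_2^2=\mathrm{Tr}(\phi^*{\bf\Delta}\phi)\le\|\phi\|_2\|{\bf\Delta}\phi\|_2$, which ensures each monomial contributes at most one power of $\|{\bf\Delta}\phi\|_2$.

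For part~$b)$ you take a genuinely different route. The paper does not set up a constrained minimization with $\omega$ emerging as a Lagrange multiplier; instead it fixes $\omega=\theta(F(x_0)+\epsilon)$ \emph{a priori} for small $\epsilon>0$, so that the stationary equation becomes ${\bf\Delta}\phi_0+\theta V'(\phi_0)=0$ with $V(x)=\tfrac12\big(G(x^2)-(F(x_0)+\epsilon)x^2\big)$ and $G'=F$, $G(0)=0$. The hypotheses on $F$ (positive leading coefficient, unique positive local minimum $x_0$, $F(x_0)<F(0)$) are then used only to verify that $V$ is nonnegative with a second-order zero at the origin and a single local minimum on $\mathbb R_+$ --- exactly the standing assumptions of Theorem~1 in \cite{DJN}, which delivers a self-adjoint $\phi_0\in D({\bf\Delta})$ for $\theta$ large. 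Your variational programme in the diagonal sector is essentially a proposal to re-prove that cited result from scratch; it is more self-contained but considerably longer, and the compactness step you flag as the main obstacle is precisely what \cite{DJN} already handles. The paper's shortcut is the algebraic observation that the freedom in $\omega$ lets one shift the potential into a known existence regime.
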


This result follows by a slight adaptation of the methods of \cite{DJN,DJ}.
 The essential new aspect of $a)$, as compared to the corresponding
result for the  classical case \cite[Theorem 3.2]{Strauss}, is its validity for
interaction polynomials without
restrictions on the degree of nonlinearity. Solitary waves as in b)
 are well known to exist for the classical nonlinear
Schr{\"o}dinger equation with attractive interaction
$-|\varphi|^{p-1}\varphi$ in the range $1<p<1+4/(n-2)$ for  $n\geq 3$ \cite{CaLi,GSS,ShStrauss}.

 In order to formulate our main results on the scattering problem we
need to introduce ap\-propriate Hilbert spaces and auxiliary norms.
By $\vert n\rangle$, $n=(n_1,\dots,n_k)\in\mathbb N_0^d$ we shall denote the
standard orthonormal basis for $L^2(\mathbb R^d)$ consisting of
eigenstates for the $d$-dimensional harmonic oscillator, and by
$(\phi_{mn})$ we denote the matrix representing a bounded operator
$\phi$ with respect to this basis, that is
\[
\phi = \sum_{m,n}\phi_{mn}\vert n\rangle\langle m|,\qquad \phi_{mn}:=\langle n|\phi|m\rangle .
\]
Let
\[
b_{mn} := 1+|m-n| ,\qquad m,n\in \mathbb N_0^d ,
\]
where $|\cdot|$ denotes the Euclidean norm in $\mathbb R^d$. For
exponents $p\geq 1$, $\alpha\in\mathbb R$ and $\phi$ as above, we def\/ine
the norms $\Vert\cdot \Vert_{p,\alpha}$  by
\begin{gather}
\label{alphanorms}
\Vert\phi\Vert_{p,\alpha}^p := \sum_{m,n}b_{mn}^{\alpha p}|\phi_{mn}|^p ,
\quad p<\infty, \qquad\mbox{and}\qquad
\Vert\phi\Vert_{\infty,\alpha}:=\sup_{m,n}\{b_{mn}^\alpha|\phi_{mn}|\} ,
\end{gather}
and we let ${\cal L}_{p,\alpha}$ denote the space of operators
on $L^2(\mathbb R^d)$ for which
$\Vert\cdot \Vert_{p,\alpha}$ is f\/inite. We note that
$\mathcal H_{\alpha}:=\mathcal L_{2,\alpha}$ is a  Hilbert space with
scalar product
\[
\langle\phi,\psi\rangle_\alpha:=\sum_{m,n}b_{mn}^{2\alpha} \overline{\phi_{mn}}
 \psi_{mn} ,
\]
and, in particular, ${\cal H}_{0}$ equals  the space $\cal H$ of
Hilbert--Schmidt operators on $L^2(\mathbb R^d)$ with $\Vert\cdot\Vert_{2,0} =
\Vert\cdot\Vert_2$.
Clearly, the operator $U_\alpha:\mathcal H_\alpha\to \mathcal H$ def\/ined by
\begin{gather}\label{Ualpha}
(U_\alpha\phi)_{mn}=
b^{\alpha}_{mn}\phi_{mn} ,\qquad \phi\in\mathcal H_\alpha ,
\end{gather}
is unitary, and the operator
${\bf\Delta}_\alpha := U_\alpha^*\, {\bf \Delta}\,U_\alpha$, with domain
$D({\bf\Delta}_\alpha):=U_\alpha^* D({\bf \Delta})$,
 is self-adjoint on $\mathcal H_\alpha$.

We use the notation $\Vert\cdot\Vert_{\rm op}$ for
the standard operator norm and def\/ine the norm $\Vert\cdot\Vert_a$ by
\[
\Vert\phi\Vert_a := \Vert\widetilde\phi\Vert_{\rm op} ,
\]
whenever the operator
\[
 \widetilde\phi := \sum_{m,n}|\phi_{mn}|\, |n\rangle\langle m|
\]
 is bounded.

With this notations we have the following decay estimate for the free
propagation:
\begin{theorem}\label{th2} For $\alpha\geq 0$ and  $\beta > d$ there exists a constant
$c_\beta>0$ such that
\begin{gather}\label{freedecay}
\Vert\, e^{-it{\bf\Delta}_\alpha}\phi \Vert_a  \leq  c_{\beta}(1+
|t|)^{-\frac{d}{2}} \Vert\phi\Vert_{1,\beta} ,
\qquad \phi\in {\cal H}_\alpha\cap {\cal L}_{1,\beta} .
\end{gather}
\end{theorem}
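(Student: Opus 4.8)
The plan is to reduce the estimate to a one-dimensional dispersive bound for a single ``diagonal'' of the matrix $(\phi_{mn})$ and then sum over diagonals. First I would eliminate the parameter $\alpha$. A direct computation on matrix units gives
\[
{\bf\Delta}\big(|n\rangle\langle m|\big)=2\sum_{k=1}^d\Big[(n_k+m_k+1)\,|n\rangle\langle m|-\sqrt{n_km_k}\,|n-e_k\rangle\langle m-e_k|-\sqrt{(n_k+1)(m_k+1)}\,|n+e_k\rangle\langle m+e_k|\Big],
\]
where $e_k$ is the $k$-th unit vector, so that ${\bf\Delta}$ leaves each diagonal subspace $\{\phi:\phi_{mn}=0\text{ unless }n-m=r\}$, $r\in\mathbb Z^d$, invariant. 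Since $b_{mn}=1+|m-n|$ is constant (equal to $1+|r|$) on such a subspace, the unitary $U_\alpha$ of \eqref{Ualpha} acts there as the scalar $(1+|r|)^\alpha$ and hence commutes with ${\bf\Delta}$; thus ${\bf\Delta}_\alpha=U_\alpha^*{\bf\Delta}U_\alpha$ and ${\bf\Delta}$ induce the same map on matrix elements, and so do $e^{-it{\bf\Delta}_\alpha}$ and $e^{-it{\bf\Delta}}$. As $\Vert\cdot\Vert_a$ and $\Vert\cdot\Vert_{1,\beta}$ depend only on $(\phi_{mn})$, it suffices to prove \eqref{freedecay} with ${\bf\Delta}_\alpha$ replaced by ${\bf\Delta}$, which then covers all $\alpha\ge0$ simultaneously.

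Next I would decompose $\phi=\sum_{r}\phi^{(r)}$ into its diagonal parts. Because $e^{-it{\bf\Delta}}$ preserves diagonals, the parts $e^{-it{\bf\Delta}}\phi^{(r)}$ are supported on distinct diagonals, so $\widetilde{e^{-it{\bf\Delta}}\phi}=\sum_r\widetilde{e^{-it{\bf\Delta}}\phi^{(r)}}$ and subadditivity of $\Vert\cdot\Vert_{\rm op}$ gives $\Vert e^{-it{\bf\Delta}}\phi\Vert_a\le\sum_r\Vert e^{-it{\bf\Delta}}\phi^{(r)}\Vert_a$. For $\psi$ supported on the diagonal $r$ the operator $\widetilde\psi$ is a weighted shift $|m\rangle\mapsto|\psi_{m,m+r}|\,|m+r\rangle$, whence $\Vert\psi\Vert_a=\sup_m|\psi_{m,m+r}|$. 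Moreover the displayed action of ${\bf\Delta}$ factorises over the index $k$, so the restriction of $e^{-it{\bf\Delta}}$ to the diagonal $r$ has kernel $\prod_{k=1}^dK^{(r_k)}_{m_k,m'_k}(t)$, where $K^{(\rho)}_{\cdot\cdot}(t)$ is the one-dimensional diagonal-$\rho$ propagator. Combining $\Vert\psi\Vert_a=\sup_m|\psi_{m,m+r}|$ with the sharp $\ell^1\to\ell^\infty$ bound yields
\[
\Vert e^{-it{\bf\Delta}}\phi\Vert_a\le\sum_{r\in\mathbb Z^d}\Big(\prod_{k=1}^dg_{r_k}(t)\Big)\sum_{m'}|\phi_{m',m'+r}|,\qquad g_\rho(t):=\sup_{m,m'}\big|K^{(\rho)}_{m,m'}(t)\big|.
\]

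The analytic core, and the step I expect to be the main obstacle, is a uniform dispersive estimate for the one-dimensional kernel. Diagonalising the tridiagonal operator $c_m\mapsto 2[(2m+\rho+1)c_m-\sqrt{(m+\rho+1)(m+1)}\,c_{m+1}-\sqrt{(m+\rho)m}\,c_{m-1}]$ by the orthonormal Laguerre functions $\ell^{(\rho)}_m(\lambda)=(m!/\Gamma(m+\rho+1))^{1/2}\lambda^{\rho/2}e^{-\lambda/2}L^{(\rho)}_m(\lambda)$, for which $\ell^{(\rho)}_m$ is a generalised eigenfunction with eigenvalue $2\lambda$, one obtains the oscillatory integral $K^{(\rho)}_{m,m'}(t)=\int_0^\infty e^{-2i\lambda t}\ell^{(\rho)}_m(\lambda)\ell^{(\rho)}_{m'}(\lambda)\,d\lambda$ (the case $\rho<0$ reducing to $\rho>0$ by symmetry, so $g_\rho=g_{|\rho|}$). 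The target is $g_\rho(t)\le C(1+|\rho|)(1+|t|)^{-1/2}$, i.e.\ the one-dimensional dispersive rate $(1+|t|)^{-1/2}$ with a constant growing at most linearly in $\rho$. The difficulty is that naive integration by parts produces constants depending on $m,m'$; the uniformity must instead be extracted from stationary-phase and van der Corput analysis near the turning points of the amplitude, controlled by Plancherel--Rotach asymptotics for the Laguerre functions (equivalently, by expressing the integral through Jacobi polynomials and invoking their classical uniform bounds).

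Granting this estimate, the theorem follows by a term-by-term comparison. By the arithmetic--geometric mean inequality $\prod_{k=1}^d(1+|r_k|)\le(1+|r|)^d$, so that $\prod_kg_{r_k}(t)\le C^d(1+|t|)^{-d/2}(1+|r|)^d$; since $\beta>d$ we have $(1+|r|)^d\le(1+|r|)^\beta$, hence $\prod_kg_{r_k}(t)\le C^d(1+|t|)^{-d/2}(1+|r|)^\beta$. Substituting into the displayed bound and recognising $\sum_r(1+|r|)^\beta\sum_{m'}|\phi_{m',m'+r}|=\Vert\phi\Vert_{1,\beta}$ yields \eqref{freedecay} with $c_\beta=C^d$, completing the proof.
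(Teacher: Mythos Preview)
Your outline is essentially the paper's argument, repackaged. The elimination of $\alpha$ via the diagonal invariance is exactly how the paper proceeds; your decomposition $\phi=\sum_r\phi^{(r)}$ together with the subadditivity $\|e^{-it{\bf\Delta}}\phi\|_a\le\sum_r\|e^{-it{\bf\Delta}}\phi^{(r)}\|_a$ and $\|\psi\|_a=\sup_m|\psi_{m,m+r}|$ on a single diagonal plays the same role as the paper's Lemma stating $\|\cdot\|_a\le C_\beta\|\cdot\|_{\infty,\beta}$ for $\beta>d$, followed by passing the weight $b_{nm}^\beta$ through the Kronecker delta in the kernel. Both routes reduce everything to $g_\rho(t)=\sup_{m,m'}|K^{(\rho)}_{m,m'}(t)|$, which is precisely the content of the paper's Lemma~\ref{Uniest1}.

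The one substantive point is the step you flag as the main obstacle. The paper does not go through your Laguerre oscillatory integral and stationary phase; instead it evaluates $K^{(\rho)}_{m,m'}(t)$ in closed form as a Jacobi polynomial,
\[
\big|K^{(\rho)}_{m,m'}(t)\big|=\Big(l+\tfrac{\alpha+\beta+1}{2}\Big)^{-1/2}(1+t^2)^{-1/2}(1-X)^{\alpha/2}(1+X)^{\beta/2}\big|{\bf P}_l^{\alpha,\beta}(X)\big|,\qquad X=\frac{t^2-1}{t^2+1},
\]
with $\alpha=|\rho|$, $\beta=|m-m'|$, $l=\min(m,m')$, and then applies the Erd\'elyi--Magnus--Nevai uniform bound $(1-X)^{\alpha/2+1/4}(1+X)^{\beta/2+1/4}|{\bf P}_l^{\alpha,\beta}(X)|\le C\sqrt{2+\sqrt{\alpha^2+\beta^2}}$. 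The resulting ratio $(2+\sqrt{\alpha^2+\beta^2})/(2l+\alpha+\beta+1)$ is bounded by an absolute constant, so one gets the \emph{$\rho$-independent} estimate $g_\rho(t)\le C(1+|t|)^{-1/2}$, stronger than the $C(1+|\rho|)(1+|t|)^{-1/2}$ you target. Either bound suffices for your final assembly since $\beta>d$. (Incidentally, $\prod_k(1+|r_k|)\le(1+|r|)^d$ is immediate from $|r_k|\le|r|$, no AM--GM needed.)
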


This estimate should be compared to the standard one used for the
classical nonlinear wave equations with the $L^\infty$-norm on the
left and a $L^1$-Sobolev norm on the right, applied to
functions of $2d$ space variables. In this case, one obtains rather
trivially a decay exponent~$d$ instead of~$d/2$. However, those norms
are not well behaved with respect to the $\star$-product and cannot be
used for our purposes. Instead, we have to work a bit harder to establish
(\ref{freedecay}) using uniform estimates on the classical Jacobi
polynomials. This is accomplished in Section~\ref{secth2}.

Our last main result concerns the existence of wave operators
def\/ined on a scattering subspace $\Sigma_\alpha\subset \mathcal H_\alpha$. In order to def\/ine
the latter we introduce, for f\/ixed $\alpha\geq 0$, the scattering norm
of $\phi\in{\cal H}_\alpha$ by
\begin{gather*}%\label{scatnorm}
|||\phi|||_{\alpha}:=
\Vert\phi\Vert_{2,\alpha}+\sup_{t\in\mathbb R} |t|^{\frac d2}
\Vert e^{-it{\bf\Delta}_\alpha}\phi\Vert_a
\end{gather*}
and set
\begin{gather*}
%\label{sigmascat}
\Sigma_{\alpha}:=\left\{\phi\in\mathcal H_\alpha\,|\;|||\phi|||_{\alpha}<\infty\right\}\,.
\end{gather*}
From \eqref{alphanorms}, we immediately see that
$\mathcal L_{p,\alpha}\subset\mathcal L_{q,\beta}$, if $p\leq q$ and
$\alpha p\geq\beta q$, so that Theorem \ref{th2} gives, in particular, the
inclusion $\mathcal L_{1,\beta}\subset\Sigma_\alpha$ if $\beta> \max\{d, 2\alpha\}$.

The relevant integral equations to solve in a scattering situation
correspond to initial/f\/inal data $\phi_\pm$ at $t=\pm\infty$ and
take the form
\begin{gather}
\label{NLSch4-}
\phi(t)=e^{-it{{\bf\Delta}_\alpha}}\phi_--i\int_{-\infty}^t
e^{i{(s-t){\bf\Delta}_\alpha}}\phi(s) F\big(|\phi(s)|^2\big)\,ds ,
\end{gather}
and
\begin{gather}
\label{NLSch4+}
\phi(t)=e^{-it{{\bf\Delta}_\alpha}}\phi_++i\int^{+\infty}_t
e^{i{(s-t){\bf\Delta}_\alpha}}\phi(s) F\big(|\phi(s)|^2\big)\,ds ,
\end{gather}
respectively, where $F$ has been redef\/ined to include $\theta$.
 {\it We assume here for simplicity that the polynomial $F$
has no constant term}, since such a term could trivially be incorporated by
subtracting it from ${\bf\Delta}_\alpha$ in the preceding formulas.   The
f\/irst problem
then is to determine spaces of initial/f\/inal data $\phi_\pm$, such that the
equations above have a unique global solution and such that these
solutions behave as free solutions for $t\to \pm\infty$.
To this end we establish the following in Section~\ref{secth3}.

\begin{theorem}\label{th3}
Let $\alpha>2d$ and assume that the polynomial $F$ has no constant
term and, in addition, no linear term
if $d=1$ or $d=2$.

Then there exists $\delta>0$ such that for every
 $\phi_\pm\in\Sigma_{\alpha}$ with $|||\phi_\pm|||_{\alpha}<\delta$, the
equations~\eqref{NLSch4-} and~\eqref{NLSch4+} have unique globally
defined continuous solutions $\phi^\pm:\mathbb R\to \Sigma_{\alpha}$ fulfilling
\begin{gather}\label{solsmall1}
\Vert\phi^\pm(t)- e^{-it{{\bf\Delta}_\alpha}}\phi_\pm\Vert_{2,\alpha}  \to
0,\qquad\mbox{for}\ \ t\to\pm\infty .
\end{gather}
If, furthermore, $F$ has no linear term, then we have
\begin{gather}\label{solsmall2}
 |||e^{it{{\bf\Delta}_\alpha}}\phi^\pm(t)- \phi_\pm |||_{\alpha}  \to
0,\qquad\mbox{for}\  \ t\to\pm\infty .
\end{gather}
\end{theorem}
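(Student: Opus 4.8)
The plan is to solve the integral equations \eqref{NLSch4-} and \eqref{NLSch4+} by a contraction mapping argument on the Banach space of continuous $\Sigma_\alpha$-valued functions, exploiting the decay estimate of Theorem~\ref{th2} to control the nonlinearity. Let me sketch how I would set this up and where the difficulties lie.

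Consider the map $\Phi$ defined by the right-hand side of \eqref{NLSch4-}, acting on functions $\phi:\mathbb R\to\Sigma_\alpha$, equipped with a norm adapted to the scattering estimates. The key structural observation is that $\Sigma_\alpha$ is an algebra under operator multiplication, so that the nonlinear term $\phi\,F(|\phi|^2)$ makes sense and its $\Sigma_\alpha$-norm is controlled by a polynomial in $|||\phi|||_\alpha$ with no constant or linear term (the latter because $F$ has no constant term, and when we ask for the stronger conclusion, no linear term either). Since $F$ has no constant term, the lowest-order contribution in the nonlinearity is at least cubic (degree $3$ in $\phi$), except for the linear term of $F$ which contributes a cubic term $\phi\cdot c|\phi|^2$; this is why the hypotheses distinguish the cases $d=1,2$.

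First I would establish the algebra property: using the multiplicativity of $U_\alpha$-conjugation and the Hölder-type inequalities relating $\Vert\cdot\Vert_{2,\alpha}$, $\Vert\cdot\Vert_a$, and $\Vert\cdot\Vert_{p,\beta}$, show that $|||\phi\psi|||_\alpha\lesssim |||\phi|||_\alpha\,|||\psi|||_\alpha$ when $\alpha>2d$. The restriction $\alpha>2d$ enters precisely here, guaranteeing summability of the weights $b_{mn}^{-\alpha}$ needed to bound the $\Vert\cdot\Vert_a$-norm of a product and to pass between the $2$-norm and the operator-type norm. Next I would use Theorem~\ref{th2}, in the form giving $\Vert e^{-it{\bf\Delta}_\alpha}\phi\Vert_a\lesssim(1+|t|)^{-d/2}|||\phi|||_\alpha$, together with the $t^{-d/2}$ decay built into the scattering norm, to estimate the integral term. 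The crucial time integral $\int_t^{+\infty}(1+|s|)^{-kd/2}\,ds$ converges when the nonlinearity has degree $k\geq 3$ and $d\geq 1$, or degree $k\geq 2$ for the cubic-from-linear term when $d\geq 3$; this convergence is what forces the exclusion of the linear term in low dimension.

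The main obstacle, and the heart of the argument, is showing that $\Phi$ maps a small ball in the scattering-norm space into itself and is a contraction there, which requires carefully propagating \emph{both} pieces of the scattering norm $|||\cdot|||_\alpha=\Vert\cdot\Vert_{2,\alpha}+\sup_t|t|^{d/2}\Vert e^{-it{\bf\Delta}_\alpha}\cdot\Vert_a$ through the integral. Controlling the $\Vert\cdot\Vert_{2,\alpha}$-part is straightforward since $e^{-it{\bf\Delta}_\alpha}$ is unitary on $\mathcal H_\alpha$, but controlling the weighted $\Vert\cdot\Vert_a$-part of the Duhamel integral demands commuting the free evolution through the integral and applying the decay estimate to the integrand uniformly in the outer time variable. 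Once the contraction is established for $|||\phi_\pm|||_\alpha<\delta$ with $\delta$ small, the fixed point $\phi^\pm$ exists and is unique; the asymptotic relations \eqref{solsmall1} and \eqref{solsmall2} then follow by estimating the tail $\int_t^{\pm\infty}$ of the Duhamel integral and showing it vanishes as $t\to\pm\infty$, using dominated convergence and the integrability established above, with \eqref{solsmall2} requiring the stronger decay available when the linear term is absent.
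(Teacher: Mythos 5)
Your overall strategy---a fixed-point argument in a space of $\Sigma_\alpha$-valued functions, with Theorem~\ref{th2} supplying the decay---has the right general shape; the paper achieves the same thing by verifying the hypotheses of an abstract contraction theorem of Reed. But two of the ingredients you propose are not available, and one of them hides the actual reason for the dimensional restriction in the hypotheses. The claimed algebra property $|||\phi\psi|||_{\alpha}\lesssim|||\phi|||_{\alpha}\,|||\psi|||_{\alpha}$ is unjustified and is not what the argument runs on: the scattering norm contains the term $\sup_t|t|^{d/2}\Vert e^{-it{\bf\Delta}_\alpha}\phi\Vert_a$, and $e^{-it{\bf\Delta}_\alpha}$ is not multiplicative (${\bf\Delta}$ is second order, not a derivation), so there is no way to relate $e^{-it{\bf\Delta}_\alpha}(\phi\psi)$ to the free evolutions of the factors. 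The mechanism that replaces this is a pair of fixed-time product estimates (Lemma~\ref{product}): the nonlinearity $\phi F(|\phi|^2)$ is bounded in the two norms $\Vert\cdot\Vert_{2,\alpha}$ and $\Vert\cdot\Vert_{1,\alpha/2}$, each bound carrying at most $2p-1$ factors of $\Vert\cdot\Vert_a$ and at least two factors of $\Vert\cdot\Vert_{2,\alpha}$; the first bound controls the Duhamel integral in $\mathcal H_\alpha$ because the free flow is unitary there, while the second feeds into the decay estimate $\Vert e^{-i(t-s){\bf\Delta}_\alpha}\,\cdot\,\Vert_a\lesssim|t-s|^{-d/2}\Vert\cdot\Vert_{1,\alpha/2}$ to recover the weighted $\Vert\cdot\Vert_a$ part of the scattering norm. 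This is also where $\alpha>2d$ really enters: one needs $\alpha/2>d$ both for Lemma~\ref{product}$\,b)$ and for Theorem~\ref{th2} with $\beta=\alpha/2$; mere summability of $b_{mn}^{-\alpha}$ only requires $\alpha>d$.

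Second, your time-integral bookkeeping is inconsistent with the statement being proved. You cannot place all $k=2p+1$ factors of the nonlinearity in the decaying $\Vert\cdot\Vert_a$-norm; two of them must be measured in $\Vert\cdot\Vert_{2,\alpha}$ (resp.\ $\Vert\cdot\Vert_2$) in order to land back in $\mathcal H_\alpha$ resp.\ $\mathcal L_{1,\alpha/2}$. The integrand therefore decays like $|s|^{-d(2p-1)/2}$, not $|s|^{-kd/2}$, and the convergence condition is $\frac d2(2p-1)>1$. For $p=1$ (a linear term in $F$, i.e.\ a cubic nonlinearity) this reads $d>2$, which is precisely why the linear term must be excluded when $d=1,2$; your count ``degree $k\geq 3$ and $d\geq 1$ suffices'' would make that exclusion unnecessary, so it cannot be the right condition (and the ``degree $k\geq2$'' clause does not correspond to any term actually present). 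Likewise \eqref{solsmall2} requires the strict inequality $2p-1>1$, i.e.\ no linear term at all. Once these two points are repaired, the contraction and the limits \eqref{solsmall1}, \eqref{solsmall2} go through essentially as you describe.
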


\begin{remark} We do not know at present whether the assumption that $F$
be without linear term is strictly necessary or is merely an artifact
due to the crudeness of the methods applied. A~similar, but weaker,
limitation on the behaviour of $F$ close to $0$
appears in the corresponding result for the classical NLS quoted in
\cite[Teorem 6.6]{Strauss}, where the specif\/ication of the scattering
spaces is, however, not made explicit.
\end{remark}

This result allows the def\/inition of injective {\it wave operators}
$\Omega_{\pm}: \{\phi_\pm\in\Sigma_{\alpha}|\;
|||\phi_\pm|||_\alpha\leq\delta\} \to\Sigma_{\alpha}$ for small data
at $\pm\infty$ in the standard fashion by
\begin{gather}\label{waveop}
\Omega_{\pm}\phi_\pm = \phi^\pm(0) .
\end{gather}
It even allows a def\/inition of a {\it scattering operator}
$S=\Omega_+^{-1}\Omega_-$ for suf\/f\/iciently small data at~$-\infty$,
see Remark~\ref{remscat} below. Existence of wave operators def\/ined
for arbitrary data in
$\Sigma_\pm$ would follow if the corresponding Cauchy problem
\begin{gather*}
%\label{NLSch3a}
\phi(t) = e^{-i(t-t_0){\bf\Delta}_\alpha}\phi_0 - i\int_{t_0}^t
e^{i(s-t)t{\bf\Delta}_\alpha}\phi(s) F\big(|\phi(s)|^2\big)\,ds ,
\end{gather*}
 has global solutions for all $\psi_0\in{\cal H}_\alpha$. The proof of
the global existence result of Theorem~\ref{th1} relies on the
conservation of $\Vert\cdot\Vert_2$-norm, which does not hold for the
$\Vert\cdot\Vert_{2,\alpha}$-norm if $\alpha\neq 0$. Hence, we do not at
present know how to treat large scattering data except for the case where
 $\phi$ is assumed to be a diagonal operator w.r.t. the harmonic
oscillator basis $\{|n\rangle\}$, and in particular for the
rotationally symmetric case. The results for this case are reported in
 Section~\ref{secdiag}.

\section{Existence of global solutions}
\label{secth1}

In this section we give a proof of Theorem \ref{th1}.

\begin{proof}[Proof of part $\boldsymbol{a)}$.]  This follows by a straight-forward
application of well known techniques, see e.g.~\cite{Reed}. Hence, we
only indicate the main line of argument.

Iterating the inequality
\[
\Vert \phi_1\phi_2- \psi_1\psi_2\Vert_2 \leq
\Vert\phi_1-\psi_1\Vert_2\Vert\phi_2\Vert_2+\Vert\psi_1\Vert_2\Vert\phi_2-\psi_2\Vert_2,\qquad
 \phi_1, \phi_2, \psi_1, \psi_2\in{\cal H},
\]
and using $\Vert\phi\Vert _2=\Vert \phi^*\Vert_2$ we obtain
\begin{gather}\label{H1}
\Vert\phi F(\phi^*\phi)-\psi F(\psi^*\psi)\Vert_2 \leq C_1(
\Vert\phi\Vert_2,\Vert\psi\Vert_2)\Vert\phi-\psi\Vert_2  ,
\end{gather}
where $C_1$ is a polynomial with positive coef\/f\/icients, in particular an
increasing function of $\Vert\phi\Vert_2$ and $\Vert\psi\Vert_2$.
By Corollary~1 to Theorem~1 of \cite{Reed}, this suf\/f\/ices to ensure existence of a
local continuous solution $\phi: \,]t_0-T,t_0+T[\, \to {\cal H}$ to (\ref{NLSch3}), where $T$
is a decreasing function of $\Vert\phi_0\Vert_2$.

Using that the mappings $\phi\to [a_k,\phi]$ and  $\phi\to
[a^*_k,\phi]$ are derivations on $W^{-1}{\cal S}(\mathbb R^{2d})$, it
follows that ${{\bf\Delta}}(\phi_1\cdots\phi_n)$ is a polynomial in
$\phi_i$, $[a_k,\phi_i]$, $[a_k^*,\phi_i]$,
${\bf\Delta}\phi_i$, $i=1,\dots,n$, $k=1,\dots,d$. Since
\begin{gather}\label{est1}
\Vert[a_k,\phi]\Vert^2_2 =
\mbox{Tr}(\phi^*[a_k^*,[a_k,\phi]])=
\mbox{Tr}(\phi^*{{\bf\Delta}}\phi)\leq \Vert\phi\Vert_2\Vert{{\bf\Delta}}\phi\Vert_2 ,
\end{gather}
we conclude as above that
\begin{gather*}%\label{H2}
\Vert {\bf\Delta}(\phi F(\phi^*\phi)-\psi F(\psi^*\psi))\Vert_2 \leq
C_2(\Vert\phi\Vert_2,\Vert\psi\Vert_2,\Vert{\bf\Delta}\phi\Vert_2,\Vert{\bf\Delta}\psi\Vert_2)(\Vert\phi-\psi\Vert_2
+ \Vert{\bf\Delta}(\phi-\psi)\Vert_2),
\end{gather*}
where $C_2$ is an increasing function of its arguments. In the f\/irst
place, this inequality holds for $\phi,\psi\in W^{-1}{\cal S}(\mathbb
R^{2d})$, but since ${\bf\Delta}$ equals the closure of its restriction to $W^{-1}{\cal S}(\mathbb R^{2d})$,
it holds for all $\phi$, $\psi$ in its domain $D({\bf\Delta})$. By Theorem~1 in~\cite{Reed} (or rather its proof) it follows that for each $\phi_0\in
D({\bf\Delta})$ there exists a unique strong solution
$\phi: \; ]t_0-T,t_0+T[\to D({\bf\Delta})$ to equation~(\ref{NLSch2}) with $\phi(t_0)=\phi_0$, where $T>0$ can be
chosen as a decreasing function of $\Vert\phi_0\Vert_2$ and
$\Vert{\bf\Delta}\phi_0\Vert_2$.

For strong solutions the 2-norm is conserved:
\begin{gather}\label{conserv1}
\frac{d}{dt}\Vert\phi(t)\Vert^2_2
= -i{\rm Tr}\big[({\bf\Delta}\phi -\phi F(\phi^*\phi))^*\phi\big] +
i{\rm Tr}\big[\phi^*({\bf\Delta}\phi-\phi F(\phi^*\phi))\big] =0 .
\end{gather}
Furthermore, since the polynomial ${\bf\Delta}(\phi F(\phi^*\phi))$ considered
above,  is a sum of monomials each of which either contains one factor
${\bf\Delta}\phi$ or ${\bf\Delta}\phi^*$ or two factors of the form
$[a_k,\phi]$, $[a^*_k,\phi]$, $[a_k,\phi^*]$,  $[a^*_k,\phi^*]$, we
conclude from (\ref{est1}) that
\begin{gather}\label{H02}
\Vert{\bf\Delta}( \phi F(\phi^*\phi))\Vert _2\leq C_3(\Vert\phi\Vert_2)
\Vert{\bf\Delta}\phi\Vert_2 ,\qquad \phi\in D({\bf\Delta}) ,
\end{gather}
where $C_3$ is an increasing function of its argument.

Using (\ref{conserv1}) and (\ref{H02}) it follows from Theorem 2 of
\cite{Reed} that strong solutions are globally def\/ined, i.e. we can
choose $T=\infty$. Finally, using (\ref{H1}) and (\ref{conserv1}) again
one shows via Corollary~2 to Theorem~14 of \cite{Reed} that the weak
solutions are likewise globally def\/ined. This proves $a)$.
\end{proof}

\begin{proof}[Proof of part $\boldsymbol{b)}$.] Let $G$ be the polynomial vanishing
at $x=0$ and satisfying $G'=F$. Under the stated assumptions on $F$
it follows that the polynomial $G(x^2)-F(x_0)x^2$ is positive  except
at $x=0$, which is a second order zero, and its derivative is positive
on $\mathbb R_+$ except for a zero at $x=\sqrt{x_0}$. Hence, for
$\epsilon >0$ suf\/f\/iciently small, the polynomial
\[
V(x) = \frac 12\big(G\big(x^2\big)- (F(x_0)+\epsilon)x^2\big) ,
\]
is positive, except for a second order zero at $x=0$, and has a
a single local minimum on $\mathbb R_+$. Thus $V(x)$ fulf\/ills the
assumptions of Theorem~1 of~\cite{DJN} implying the existence of a
self-adjoint solution $\phi_0\in D({\bf\Delta})$ to the equation
\[
{\bf\Delta}\phi + \theta V'(\phi) = 0 ,
\]
if $\theta$ is suf\/f\/iciently large.  It then follows that $\phi(t) =
e^{i\omega(t_0-t)}\phi_0$ is a strong solution of~(\ref{NLSch2}), with
$\omega=\theta(F(x_0)+\epsilon)$.
This proves $b)$.
\end{proof}

\section{Some norm inequalities}
\label{secframe}

In preparation for the proofs of Theorems \ref{th2} and \ref{th3} we
collect in this section a few useful lemmas.

\begin{lemma}
\label{Bbounded}
For $\alpha<-d$, the matrix
$\{b_{mn}^\alpha\}$ represents a bounded operator $B_\alpha$ on
$L^2(\mathbb R^d)$ with respect to the basis $\{|n\rangle\}$.
\end{lemma}
\begin{proof}
For any $x,y\in L^2(\mathbb R^d)$, consider their coordinate sequences
$\{x_n\},\{y_n\}\in \ell^2(\mathbb N_0^d)$ w.r.t.\ the basis $\{|n\rangle\}$. We have
\begin{gather*}
\vert\langle x,B_\alpha y\rangle\vert=
\bigg|\sum_{m,n}\overline{x_m} b^\alpha_{mn} y_n\bigg| \leq
\sum_{m,n}(1+|m-n|)^\alpha |x_m| |y_n|\\
\hphantom{\vert\langle x,B_\alpha y\rangle\vert=
\bigg|\sum_{m,n}\overline{x_m} b^\alpha_{mn} y_n\bigg|}{}
\leq\sum_{m,k}(1+|k|)^\alpha |x_m| |y_{k+m}|
\leq\sum_k(1+|k|)^\alpha \Vert x\Vert_{L^2} \Vert y\Vert_{L^2} ,
\end{gather*}
which concludes the proof.
\end{proof}

\begin{lemma}\label{anorm}
For any $\alpha>d$, there exists a constant $C_\alpha>0$ such that the following holds:
\[
\Vert\phi\Vert_a\leq C_\alpha \Vert\phi\Vert_{\infty,\alpha} ,\qquad \phi\in{\cal L}_{\infty,\alpha} .
\]
\end{lemma}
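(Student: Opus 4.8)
The plan is to reduce the estimate to Lemma~\ref{Bbounded} by a positivity (Schur-type) domination argument. First I would unpack the definitions: by the defining formula for $\Vert\cdot\Vert_{\infty,\alpha}$ in \eqref{alphanorms}, the matrix entries of $\phi$ satisfy the pointwise bound
\[
|\phi_{mn}| \leq \Vert\phi\Vert_{\infty,\alpha}\, b_{mn}^{-\alpha}, \qquad m,n\in\mathbb{N}_0^d .
\]
Since $\alpha > d$ by hypothesis, the exponent $-\alpha$ is strictly less than $-d$, so Lemma~\ref{Bbounded} (applied with its $\alpha$ replaced by $-\alpha$) guarantees that the matrix $\{b_{mn}^{-\alpha}\}$ represents a bounded operator $B_{-\alpha}$ on $L^2(\mathbb{R}^d)$; I would then set $C_\alpha := \Vert B_{-\alpha}\Vert_{\mathrm{op}}$.

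Second, I would exploit the fact that $\widetilde\phi$ and $B_{-\alpha}$ both have nonnegative matrix entries, together with the entrywise domination above. For arbitrary $x, y \in L^2(\mathbb{R}^d)$ with coordinate sequences $\{x_m\}, \{y_n\}$ w.r.t.\ the basis $\{|n\rangle\}$, one estimates
\[
|\langle x, \widetilde\phi\, y\rangle| \leq \sum_{m,n} |x_m|\, |\phi_{mn}|\, |y_n| \leq \Vert\phi\Vert_{\infty,\alpha} \sum_{m,n} |x_m|\, b_{mn}^{-\alpha}\, |y_n| = \Vert\phi\Vert_{\infty,\alpha}\, \langle |x|, B_{-\alpha}\, |y|\rangle ,
\]
where $|x|, |y|$ denote the vectors with components $|x_m|, |y_n|$, which have the same $L^2$-norms as $x, y$. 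Bounding the last inner product by $C_\alpha \Vert x\Vert_{L^2} \Vert y\Vert_{L^2}$ and taking the supremum over unit vectors shows both that $\widetilde\phi$ is bounded (so that $\Vert\phi\Vert_a$ is well defined) and that $\Vert\phi\Vert_a = \Vert\widetilde\phi\Vert_{\mathrm{op}} \leq C_\alpha \Vert\phi\Vert_{\infty,\alpha}$, which is the claim.

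The argument is essentially a repetition of the computation carried out in the proof of Lemma~\ref{Bbounded}, so I do not anticipate a serious obstacle; indeed one may simply take $C_\alpha = \sum_{k\in\mathbb{Z}^d}(1+|k|)^{-\alpha}$, which is finite precisely because $\alpha > d$. The only point requiring a little care is the passage from the entrywise inequality between the two matrices to the inequality between their operator norms: this is \emph{not} valid for general matrices, but holds here because all entries in sight are nonnegative, which is exactly what permits inserting absolute values inside the bilinear form and comparing against $B_{-\alpha}$ term by term. I would therefore make sure to state this positivity step explicitly rather than treat it as automatic.
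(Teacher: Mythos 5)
Your proposal is correct and follows essentially the same route as the paper: both factor $|\phi_{mn}|\le \Vert\phi\Vert_{\infty,\alpha}\,b_{mn}^{-\alpha}$ inside the bilinear form $\langle x,\widetilde\phi\,y\rangle$ and then invoke Lemma~\ref{Bbounded} (equivalently, the convergence of $\sum_{k}(1+|k|)^{-\alpha}$ for $\alpha>d$) to bound the resulting positive-kernel form by $\Vert B_{-\alpha}\Vert_{\rm op}\Vert x\Vert_{L^2}\Vert y\Vert_{L^2}$. Your explicit remark that the entrywise domination transfers to the operator norm only because all entries involved are nonnegative is a point the paper leaves implicit, but the argument is the same.
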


\begin{proof}
With notation as in the previous proof we have, for $\phi\in{\cal L}_{\infty,\alpha}$,
\begin{gather*}
\bigg|\sum_{m,n}|\phi_{mn}| \overline{x_n}\,y_m\bigg| \leq
\sum_{m,n}\big|b^\alpha_{mn} \phi_{mn}\big|
\big|b^{-\alpha}_{mn} \overline{x_n} y_m\big|
\leq \sup_{m,n}\big|b^\alpha_{mn}\phi_{mn}\big|
 \Vert B_{-\alpha}\Vert_{\rm op} \Vert x\Vert_{L^2} \Vert y\Vert_{L^2} ,
\end{gather*}
which evidently implies the claim by Lemma~\ref{Bbounded}.
\end{proof}

\begin{lemma}\label{trace}
For any $\alpha<-d$ there exists a constant $C'_\alpha>0$ such that
\[
\Vert\phi\Vert_{1,\alpha}\leq C'_{\alpha} \Vert\phi\Vert_1 ,\qquad \phi\in{\cal L}_1 ,
\]
where ${\cal L}_1$ denotes the space of trace class operators and
$\Vert\cdot\Vert_1$ is the standard trace-norm.
\end{lemma}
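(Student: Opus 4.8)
The naive route, bounding $b_{mn}^\alpha\le 1$ since $\alpha<0$ and $b_{mn}\ge 1$, reduces the claim to $\sum_{m,n}|\phi_{mn}|\le C'_\alpha\Vert\phi\Vert_1$, which is false: the sum of the absolute values of all matrix entries is not controlled by the trace norm. The plan is instead to split $\phi$ into rank-one pieces via its singular value decomposition and to recognize the weighted entry-sum as a quadratic form of the operator $B_\alpha$ supplied by Lemma~\ref{Bbounded}.

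First I would write the singular value decomposition $\phi=\sum_j s_j|u_j\rangle\langle v_j|$, with $\{u_j\}$, $\{v_j\}$ orthonormal systems in $L^2(\mathbb R^d)$ and $s_j\ge 0$ the singular values, so that $\Vert\phi\Vert_1=\sum_j s_j$. Expanding the matrix entries gives $\phi_{mn}=\langle n|\phi|m\rangle=\sum_j s_j\langle n|u_j\rangle\langle v_j|m\rangle$, and the triangle inequality yields
\[
\Vert\phi\Vert_{1,\alpha}=\sum_{m,n}b_{mn}^\alpha|\phi_{mn}|\le\sum_j s_j\sum_{m,n}b_{mn}^\alpha\,|\langle n|u_j\rangle|\,|\langle v_j|m\rangle| .
\]

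For each fixed $j$ I set $a_n:=|\langle n|u_j\rangle|$ and $c_m:=|\langle v_j|m\rangle|$, which are nonnegative sequences in $\ell^2(\mathbb N_0^d)$ with $\Vert a\Vert_{\ell^2}=\Vert u_j\Vert_{L^2}=1$ and likewise $\Vert c\Vert_{\ell^2}=1$, since $\{|n\rangle\}$ is orthonormal. The inner sum is exactly the bilinear form $\langle c,B_\alpha a\rangle=\sum_{m,n}c_m b_{mn}^\alpha a_n$, and because $\alpha<-d$, Lemma~\ref{Bbounded} guarantees that $B_\alpha$ is bounded on $\ell^2$, so Cauchy--Schwarz gives
\[
\sum_{m,n}b_{mn}^\alpha c_m a_n=\langle c,B_\alpha a\rangle\le\Vert B_\alpha\Vert_{\rm op}\,\Vert a\Vert_{\ell^2}\Vert c\Vert_{\ell^2}=\Vert B_\alpha\Vert_{\rm op},
\]
uniformly in $j$. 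Summing over $j$ then produces $\Vert\phi\Vert_{1,\alpha}\le\Vert B_\alpha\Vert_{\rm op}\sum_j s_j=C'_\alpha\Vert\phi\Vert_1$ with $C'_\alpha:=\Vert B_\alpha\Vert_{\rm op}$, which is finite and in fact bounded by $\sum_k(1+|k|)^\alpha$ as in the proof of Lemma~\ref{Bbounded}.

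The conceptual heart of the argument, and the only point needing care, is precisely this passage from the matrix-entry sum to the operator $B_\alpha$: the weights $b_{mn}^\alpha$ are individually bounded but cannot be summed against $|\phi_{mn}|$ directly, whereas after the singular value decomposition they are paired with an $\ell^2$-normalized pair of sequences, which is exactly the setting in which the boundedness of $B_\alpha$ yields a bound uniform in $j$. I do not expect any further obstacle, as everything else is the standard singular value machinery for trace-class operators.
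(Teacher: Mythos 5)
Your proof is correct, and it takes a genuinely different route from the paper's. The paper first reduces to a \emph{positive} operator (splitting $\phi$ into self-adjoint real and imaginary parts and then into positive and negative parts), uses the entrywise Cauchy--Schwarz inequality $|\phi_{kl}|\leq(\phi_{kk}\phi_{ll})^{1/2}$ valid for positive operators, and then applies the bilinear-form bound of Lemma~\ref{Bbounded} to the single $\ell^2$ sequence $(\phi_{kk}^{1/2})$, whose squared norm is the trace. You instead decompose $\phi$ into rank-one pieces via the singular value decomposition and apply the same bilinear-form bound to the unit $\ell^2$ coefficient sequences of each pair of singular vectors, summing against $\sum_j s_j=\Vert\phi\Vert_1$. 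Both arguments funnel through Lemma~\ref{Bbounded} as the essential input, and your interchange of the $j$-sum with the $(m,n)$-sum is justified by nonnegativity, so there is no gap. Your version is marginally cleaner in that it yields $C'_\alpha=\Vert B_\alpha\Vert_{\rm op}$ directly, whereas the paper's reduction to positive operators costs an extra numerical factor (up to $4$) in the constant; the paper's version avoids invoking the singular value decomposition, using only the elementary positivity trick. Your opening remark that the unweighted entry-sum is \emph{not} controlled by the trace norm is also correct (a rank-one projection onto a flat unit vector supported on $N$ basis elements has entry-sum $N$ but trace norm $1$) and correctly identifies why some such decomposition is needed.
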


\begin{proof}
Writing the trace class operator $\phi$ as $\frac
12(\phi+\phi^*)+\frac i2(i\phi^*-i\phi)$ we may assume that $\phi$ is
self-adjoint. Then, writing $\phi= \phi_+-\phi_-$ where $\phi_\pm$ are positive
operators each of which has trace norm at most that of $\phi$, we can
assume $\phi$ is positive. In this case the Cauchy--Schwarz inequality
gives
\[
|\phi_{kl}|  \leq \big(\phi_{kk} \phi_{ll}\big)^{1/2},
\]
and hence
\begin{gather*}
\sum_{k,l}b_{kl}^{\alpha} |\phi_{kl}|\leq
\sum_{k,l}b_{kl}^{\alpha} \phi_{kk}^{1/2} \phi_{ll}^{1/2}
\leq \Vert B_{\alpha}\Vert_{\rm op} \sum_k\phi_{kk}
=\Vert B_{\alpha}\Vert_{\rm op} \Vert\phi\Vert_1 ,
\end{gather*}
which proves the claim thanks to Lemma~\ref{Bbounded}.
\end{proof}

\begin{lemma}\label{product}
$a)$  For any $\alpha\geq 0$ there exists a constant $C_{1,\alpha}>0$
such that
\begin{gather}\label{prodest1}
\Vert\phi \psi\Vert_{2,\alpha}\leq C_{1,\alpha}(\Vert\phi\Vert_{2,\alpha} \Vert\psi\Vert_a
+\Vert\phi\Vert_a \Vert\psi\Vert_{2,\alpha}),\qquad \phi, \psi\in {\cal L}_{2,\alpha} .
\end{gather}

$b)$  For any $\alpha> d$ there exists a constant $C_{2,\alpha}>0$
such that
\begin{gather}\label{prodest2}
\Vert\phi \psi\Vert_{1,\alpha}\leq C_{2,\alpha}(\Vert\phi\Vert_{2,2\alpha} \Vert\psi\Vert_2
+\Vert\phi\Vert_2 \Vert\psi\Vert_{2,2\alpha}),\qquad \phi,\psi\in {\cal L}_{2, 2\alpha} .
\end{gather}
\end{lemma}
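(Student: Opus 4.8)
The plan rests on one elementary fact about the weights $b_{mn}=1+|m-n|$: the triangle inequality $|m-n|\leq|m-k|+|k-n|$ gives both $b_{mn}\leq b_{mk}\,b_{kn}$ and $b_{mn}\leq b_{mk}+b_{kn}\leq 2\max\{b_{mk},b_{kn}\}$, so that $b_{mn}^{\alpha}\leq 2^{\alpha}(b_{mk}^{\alpha}+b_{kn}^{\alpha})$ for $\alpha\geq0$. Together with the matrix form of the product, $(\phi\psi)_{mn}=\sum_{k}\psi_{mk}\phi_{kn}$, these are the only structural inputs I will use.

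For part $a)$ I would insert the subadditive bound to get $b_{mn}^{\alpha}|(\phi\psi)_{mn}|\leq 2^{\alpha}(A_{mn}+B_{mn})$ with $A_{mn}=\sum_{k}b_{mk}^{\alpha}|\psi_{mk}|\,|\phi_{kn}|$ and $B_{mn}=\sum_{k}|\psi_{mk}|\,b_{kn}^{\alpha}|\phi_{kn}|$. The key observation is that $A_{mn}=\langle n|\widetilde\phi\,|v^{(m)}\rangle$ with $v^{(m)}=\sum_{k}b_{mk}^{\alpha}|\psi_{mk}|\,|k\rangle$ and $\widetilde\phi$ the operator entering the definition of $\Vert\cdot\Vert_{a}$; hence $\sum_{n}|A_{mn}|^{2}=\Vert\widetilde\phi\,v^{(m)}\Vert^{2}\leq\Vert\phi\Vert_{a}^{2}\Vert v^{(m)}\Vert^{2}$, and summing over $m$ produces $\Vert\phi\Vert_{a}\Vert\psi\Vert_{2,\alpha}$. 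The term $B_{mn}$ is handled symmetrically through $\widetilde\psi^{\,*}$, giving $\Vert\psi\Vert_{a}\Vert\phi\Vert_{2,\alpha}$; since $\Vert\cdot\Vert_{2,\alpha}$ is an $\ell^{2}$ norm over the index pairs, the $\ell^{2}$ triangle inequality closes the estimate with $C_{1,\alpha}=2^{\alpha}$. The operator norm $\Vert\cdot\Vert_{a}$ is precisely what absorbs the unshared summation index that a Hilbert--Schmidt norm cannot see.

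For part $b)$ this splitting is too lossy: on simple examples (near-diagonal $\psi$ versus spread-out $\phi$, and vice versa) one checks that $\sum_{m,n,k}b_{mk}^{\alpha}|\psi_{mk}|\,|\phi_{kn}|$ cannot be bounded by $\Vert\psi\Vert_{2,2\alpha}\Vert\phi\Vert_{2}$ alone, since the unshared index generates an uncontrolled $\ell^{1}$ row-sum. Instead I would keep $b_{mn}^{\alpha}$ attached and split the triple sum into the regions $b_{mk}\geq b_{kn}$ and $b_{mk}<b_{kn}$, according to whether $\psi$ or $\phi$ is the more off-diagonal factor. On $\{b_{mk}\geq b_{kn}\}$ I use $b_{mn}^{\alpha}\leq 2^{\alpha}b_{mk}^{\alpha}$ and apply Cauchy--Schwarz with an auxiliary weight $b_{kn}^{-\nu}$, pairing $b_{mk}^{2\alpha}|\psi_{mk}|\,b_{kn}^{-\nu}$ against $b_{mk}^{-\alpha}b_{kn}^{\nu}|\phi_{kn}|$. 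The first factor contributes $\Vert\psi\Vert_{2,2\alpha}$ as soon as $\sum_{n}b_{kn}^{-2\nu}<\infty$, i.e.\ $\nu>d/2$; the second contributes $\Vert\phi\Vert_{2}$ once the tail bound $\sum_{m:\,b_{mk}\geq b_{kn}}b_{mk}^{-2\alpha}\leq C\,b_{kn}^{\,d-2\alpha}$ is used to cancel the factor $b_{kn}^{2\nu}$, which forces $2\nu+d-2\alpha\leq0$, i.e.\ $\nu\leq\alpha-d/2$. These two constraints on $\nu$ are jointly satisfiable exactly when $\alpha>d$, the standing hypothesis. The region $b_{mk}<b_{kn}$ yields the symmetric bound $\Vert\phi\Vert_{2,2\alpha}\Vert\psi\Vert_{2}$, and adding the two gives the claim.

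The main obstacle is this last bookkeeping. In contrast to part $a)$, the right-hand side of $b)$ carries no operator norm, so the decay needed to tame the unshared index must be manufactured by hand; the crucial quantitative step is the tail estimate $\sum_{m:\,|m-k|\geq|k-n|}(1+|m-k|)^{-2\alpha}\leq C(1+|k-n|)^{\,d-2\alpha}$, obtained by comparison with $\int_{\rho}^{\infty}r^{d-1-2\alpha}\,dr$, and it is exactly here that the dimension and the threshold $\alpha>d$ enter. The remaining manipulations are routine summations.
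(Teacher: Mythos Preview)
Your argument for part $a)$ is correct and essentially coincides with the paper's: both use the subadditivity $b_{mn}^{\alpha}\leq c(\alpha)(b_{mk}^{\alpha}+b_{kn}^{\alpha})$ and then recognise the two resulting sums as Hilbert--Schmidt norms involving $\widetilde\phi$ and $\widetilde\psi$. The paper packages this as $\Vert(U_{\alpha}\widetilde\phi)\widetilde\psi\Vert_{2}$ and $\Vert\widetilde\phi(U_{\alpha}\widetilde\psi)\Vert_{2}$ and then applies $\Vert AB\Vert_{2}\leq\Vert A\Vert_{2}\Vert B\Vert_{\rm op}$, which is exactly your vector computation $\sum_{n}|A_{mn}|^{2}=\Vert\widetilde\phi\,v^{(m)}\Vert^{2}\leq\Vert\phi\Vert_{a}^{2}\Vert v^{(m)}\Vert^{2}$ viewed row by row.

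For part $b)$ your proof is correct but takes a genuinely different route. The paper does not split the index range; instead it writes $b_{mn}^{\alpha}=b_{mn}^{-\alpha}\,b_{mn}^{2\alpha}$, applies subadditivity only to the $2\alpha$ factor, and thereby reduces to $\Vert(U_{2\alpha}\widetilde\phi)\widetilde\psi\Vert_{1,-\alpha}+\Vert\widetilde\phi(U_{2\alpha}\widetilde\psi)\Vert_{1,-\alpha}$. The hypothesis $\alpha>d$ then enters through the earlier Lemma~\ref{trace}, which bounds $\Vert\cdot\Vert_{1,-\alpha}$ by the trace norm (via boundedness of the matrix $B_{-\alpha}$), after which H\"older's inequality $\Vert AB\Vert_{1}\leq\Vert A\Vert_{2}\Vert B\Vert_{2}$ finishes the job in one line. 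Your approach instead manufactures the needed decay by hand: the region split $\{b_{mk}\geq b_{kn}\}$ versus its complement, the weighted Cauchy--Schwarz with auxiliary exponent $\nu$, and the tail bound $\sum_{m:\,b_{mk}\geq\rho}b_{mk}^{-2\alpha}\leq C\rho^{\,d-2\alpha}$. Both arguments locate the threshold $\alpha>d$ in the same phenomenon (summability of $(1+|l|)^{-2\alpha}$ with a tail remainder of order $\rho^{\,d-2\alpha}$), but the paper's version is shorter because it recycles Lemmas~\ref{Bbounded} and~\ref{trace}, while yours is self-contained and makes the role of the dimension explicit without any auxiliary operator lemmas.
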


\begin{proof}
First, note that for $\alpha\geq 0$
\begin{gather}
\label{Bsum}
b_{mn}^\alpha\leq c(\alpha)\,(b_{mk}^\alpha+b_{kn}^\alpha) ,
\qquad  m,n,k\in\mathbb N_0^d,
\end{gather}
where the constant $c(\alpha)$ depends only on $\alpha$.
We then have
\begin{gather*}
\Vert\phi \psi\Vert_{2,\alpha}^2 =\sum_{m,n}b^{2\alpha}_{mn}\bigg|\sum_k\phi_{mk}
\psi_{kn}\bigg|^2
 \leq
C\sum_{m,n}\bigg(\sum_kb^{\alpha}_{mk} |\phi_{mk}| \, |\psi_{kn}|
+\sum_kb^{\alpha}_{kn} |\phi_{mk}|\,|\psi_{kn}|\bigg)^2\\
\hphantom{\Vert\phi \psi\Vert_{2,\alpha}^2}{}
\leq
2C\sum_{m,n}\bigg(\sum_kb^{\alpha}_{mk} |\phi_{mk}|\,|\psi_{kn}|\bigg)^2
+2C\sum_{m,n}\bigg(\sum_kb^{\alpha}_{kn} |\phi_{mk}|\,|\psi_{kn}|\bigg)^2\\
\hphantom{\Vert\phi \psi\Vert_{2,\alpha}^2}{}
=2C\,\big(\Vert(U_{\alpha}\widetilde\phi) \widetilde\psi\Vert_2^2+
\Vert\widetilde\phi (U_{\alpha}\widetilde\psi)\Vert_2^2\big)\\
\hphantom{\Vert\phi \psi\Vert_{2,\alpha}^2}{}
\leq 2C \big(\Vert U_{\alpha}\widetilde\phi\Vert_2^2
\Vert\widetilde\psi\Vert_{\rm op}^2+
\Vert U_\alpha\widetilde\psi\Vert_2^2
\Vert\widetilde\phi\Vert_{\rm op}^2\big)
 \leq 2C \big(\Vert\phi\Vert_{2,\alpha}
\Vert\psi\Vert_a+\Vert\psi\Vert_{2,\alpha} \Vert\phi\Vert_a\big)^2 ,
\end{gather*}
where  $C= c(\alpha)^2$ and $U_\alpha:\mathcal H_\alpha\to\mathcal H_0$ is
the unitary operator def\/ined by (\ref{Ualpha}). This establishes~(\ref{prodest1}).

Using again~\eqref{Bsum}, we have
\begin{gather}
\Vert\phi \psi\Vert_{1,\alpha} =\sum_{m,n}b^\alpha_{mn}\bigg|\sum_k\phi_{mk}
\psi_{kn}\bigg| \leq
c(2\alpha)\sum_{m,n,k}b^{-\alpha}_{mn}\big(|b^{2\alpha}_{mk} \phi_{mk}|\,
|\psi_{kn}|+|\phi_{mk}|\,|b^{2\alpha}_{kn} \psi_{kn}|\big)\nonumber\\
\hphantom{\Vert\phi \psi\Vert_{1,\alpha}}{}
=c(2\alpha)\big(\Vert(U_{2\alpha}\widetilde\phi) \widetilde\psi\Vert_{1,-\alpha}+
\Vert\widetilde\phi (U_{2\alpha}\widetilde\psi)\Vert_{1,-\alpha}\big) .\label{34}
\end{gather}
 Lemma~\ref{trace} and a H\"older inequality now yield, for $\alpha>d$,
\[
\Vert(U_{2\alpha}\widetilde\phi) \widetilde\psi\Vert_{1,-\alpha}
\leq C'_{-\alpha}\Vert(U_{2\alpha}\widetilde\phi) \widetilde\psi\Vert_1
\leq C'_{-\alpha}\Vert U_{2\alpha}\widetilde\phi\Vert_2\Vert\psi\Vert_2
= C'_{-\alpha}\Vert\phi\Vert_{2,2\alpha}\Vert\psi\Vert_2 .
\]
Combining this with \eqref{34} gives \eqref{prodest2}.
\end{proof}

\section{Decay of free solutions}
\label{secth2}

The goal in this section is to prove Theorem \ref{th2}. The main
step in the proof is to establish appropriate time-decay estimates for the matrix elements
$\left(e^{-it{\bf\Delta}}\right)_{nm,kl}$ of the free time-evolution
operator with respect to the orthonormal basis
$\{(|n\rangle\langle m|)\}$ for ${\cal H}$. This is our f\/irst objective.

Since the Weyl map $W$ is unitary up to a constant factor by \eqref{Wunitary}, the matrix elements of the heat operator $e^{-t{\bf\Delta}}$, $t>0$, can be
obtained in closed form from the well-known  expression for
the heat kernel in Euclidean space and the relation \eqref{tria},
making use of the fact that the functions $W^{-1}(|n\rangle\langle
m|)$ can be computed explicitly in terms of Laguerre polynomials. The
result is given in \cite{Raimar} and reads, in case $d=1$,
\begin{gather*}
\left(e^{-t{\bf\Delta}}\right)_{nm,kl}=
\delta_{m+k,n+l}\sum_{v=0}^{\min\{m,l\}} C_{nm,kl,v}
 \frac{t^{m+l-2v}}{(1+t)^{m+k+d}} ,
\end{gather*}
where
\begin{gather}
\label{matconst}
C_{nm,kl,v}:=
\sqrt{{n\choose m-v}{k\choose l-v}
{m\choose m-v}{l\choose l-v}} ,
\end{gather}
for arbitrary non-negative integers $n$, $m$, $k$, $l$. By convention the
binomial coef\/f\/icient $n\choose m$ vanishes unless $0\leq m\leq n$. Note that the presence of
the Kronecker delta factor is a consequence of rotational invariance
of the heat kernel in two-dimensional space, which in the operator
formulation entails that $ e^{-t{\bf\Delta}}$ commutes with $e^{-isa^*a}$ for
$s\in \mathbb R$, where we have dropped the subscript on the
annihilation and creation operators when $d=1$.

Since ${\bf\Delta}$ is a positive,  self-adjoint operator we obtain
$\left(e^{-it{\bf\Delta}}\right)_{nm,kl}$ for $t\in\mathbb R$ by analytic
continuation in $t$, that is
\begin{gather}
\label{mateluni}
\left(e^{-it{\bf\Delta}}\right)_{nm,kl}=
\delta_{m+k,n+l}\sum_{v=0}^{\min\{m,l\}} C_{nm,kl,v}
 \frac{(it)^{m+l-2v}}{(1+it)^{m+k+d}} .
\end{gather}

We next observe that these  matrix
elements are expressible in terms of Jacobi polynomials.

\begin{lemma}
\label{jacobi}
For $d=1$ and $l\leq m,k\leq n$, we have
\begin{gather}\label{jacrep}
\left(e^{-it{\bf\Delta}}\right)_{nm,kl}=
\delta_{m+k,n+l}
\sqrt{\frac{n! l!}{m! k!}}
\frac{(it)^{m+l}}{(1+it)^{m+k+1}}\big(1+t^{-2}\big)^l
  P_{l}^{n-m,m-l}\left(\frac{t^2-1}{t^2+1}\right) ,
\end{gather}
where $P_l^{\alpha,\beta}(X)$, $l\in\mathbb N_0$, $\alpha,\beta\geq 0$,
$X\in[-1,1]$, are the classical Jacobi polynomials with standard
normalization~{\rm \cite{Szego}}.
\end{lemma}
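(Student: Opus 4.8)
The plan is to prove \eqref{jacrep} by a direct term-by-term identification of the explicit finite sum \eqref{mateluni} with the standard explicit sum representation of the Jacobi polynomial (cf.\ \cite{Szego}),
\[
P_l^{\alpha,\beta}(X) = \sum_{s=0}^l {l+\alpha \choose s}{l+\beta \choose l-s}\left(\frac{X-1}{2}\right)^{l-s}\left(\frac{X+1}{2}\right)^s,
\]
specialized to $\alpha=n-m$ and $\beta=m-l$, which are nonnegative under the hypotheses of the lemma. Since both sides of \eqref{jacrep} carry the same Kronecker delta and the common factor $(it)^{m+l}(1+it)^{-(m+k+1)}$, the statement reduces, after dividing by this factor, to an identity between the two resulting polynomials in $t^{-2}$.

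The computational core is the substitution $X=(t^2-1)/(t^2+1)$, for which a short calculation gives $(X-1)/2=-(t^2+1)^{-1}$ and $(X+1)/2=t^2(t^2+1)^{-1}$. Multiplying the Jacobi sum by $(1+t^{-2})^l=(t^2+1)^l t^{-2l}$ then cancels all powers of $(t^2+1)$ and leaves
\[
(1+t^{-2})^l P_l^{n-m,m-l}(X) = \sum_{s=0}^l {l+n-m \choose s}{m \choose l-s}(-1)^{l-s} t^{-2(l-s)}.
\]
Reindexing by $v=l-s$ and inserting the constraint $m+k=n+l$ in the form $l+n-m=k$ turns the right-hand side into $\sum_{v} {k \choose l-v}{m \choose v}(-1)^v t^{-2v}$. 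On the other side I would rewrite the numerator in \eqref{mateluni} as $(it)^{m+l-2v}=(it)^{m+l}(-1)^v t^{-2v}$, using $i^{-2v}=(-1)^v$, which exposes the same monomials $t^{-2v}$; moreover $\min\{m,l\}=l$ under the hypothesis $l\le m$, so the two summation ranges agree.

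It then remains only to match coefficients, that is, to verify
\[
C_{nm,kl,v} = \sqrt{\frac{n!\,l!}{m!\,k!}}\,{k \choose l-v}{m \choose v}
\]
for each $v$, with $C_{nm,kl,v}$ as in \eqref{matconst}. I would square both sides, use ${m \choose m-v}={m \choose v}$ and ${l \choose l-v}={l \choose v}$, cancel the common factors ${k \choose l-v}$ and ${m \choose v}$, and so reduce the claim to ${n \choose m-v}{l \choose v}=\frac{n!\,l!}{m!\,k!}{k \choose l-v}{m \choose v}$; writing out all the factorials collapses this to the single equality $(n-m+v)!=(k-l+v)!$, which is exactly the content of $n-m=k-l$. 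This last coefficient bookkeeping is the only genuine obstacle: it is elementary but needs care, and it is precisely the step where the factor $\delta_{m+k,n+l}$ becomes indispensable, since off the diagonal $m+k=n+l$ both sides of \eqref{jacrep} vanish and there is nothing to prove.
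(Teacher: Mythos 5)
Your proposal is correct and follows essentially the same route as the paper: both arguments rest on the explicit finite-sum formula for $P_l^{\alpha,\beta}$, the substitution $X=(t^2-1)/(t^2+1)$, and the identification $C_{nm,kl,v}=\sqrt{n!\,l!/(m!\,k!)}\,\binom{m}{v}\binom{l+n-m}{l-v}$ under the constraint $m+k=n+l$. The only difference is organizational (you expand the Jacobi polynomial and match coefficients, while the paper first rewrites $C_{nm,kl,v}$ and then recognizes the resulting sum), and your explicit factorial verification of the coefficient identity is a detail the paper merely asserts.
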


\begin{proof}
For  $m+k=n+l$ and $n\geq m$ we have
\begin{gather*}
C_{nm,kl,v}
=\sqrt{\frac{n! l!}{m! k!}}{m\choose v}{l+n-m\choose l-v},
\end{gather*}
and consequently, for $l\leq m$,
\begin{gather}
\label{matelements}
\left(e^{-it{\bf\Delta}}\right)_{nm,kl}=
\delta_{m+k,n+l}\sqrt{\frac{n! l!}{m! k!}}
\frac{(it)^{m+l}}{(1+it)^{m+k+1}}
\sum_{v=0}^{l} {m\choose v}{l+n-m\choose l-v}
 \big(-t^{-2}\big)^{v} .
\end{gather}
Recall now \cite{Szego} that the classical Jacobi polynomials
$P_l^{\alpha,\beta}(X)$, $l\in\mathbb N_0,$ are orthogonal
w.r.t.\ the weight function
\begin{gather}\label{weight}
w(X)   =  (1-X)^\alpha(X+1)^\beta,\qquad X\in[-1,1],
\end{gather}
for f\/ixed values of $\alpha, \beta > -1$. For our purposes we may
restrict attention to integer values of $\alpha$ and  $\beta$. A  convenient
explicit form of $P_l^{\alpha,\beta}(X)$ with standard normalization is
\[
P_l^{\alpha,\beta}(X)=\sum_{j=0}^l{l+\alpha\choose l-j}{l+\beta\choose
  j}\left(\frac{X-1}{2}\right)^j\left(\frac{X+1}{2}\right)^{l-j} .
\]
With
\begin{gather}\label{X-t}
X=(t^2-1)/(t^2+1)\in [-1,1]
\end{gather}
this gives
\[
P_l^{\alpha,\beta}\left(\frac{t^2-1}{t^2+1}\right)
= \big(1+t^{-2}\big)^l\sum_{j=0}^l{l+\alpha\choose l-j}
{l+\beta\choose j}\big(-t^{-2}\big)^{j}
\]
and the result follows by comparison with~\eqref{matelements}.
\end{proof}

The representation~\eqref{jacrep} combined with rather recent
uniform estimates on the Jacobi polynomials may now be used to derive the
following bound on the matrix elements in question.

\begin{lemma}
\label{Uniest1}
For $d\geq 1$, there exists a constant $C_d$, independent of $n,m,k,l
\in\mathbb N_0^d$ and
$t\in\mathbb R$, such that
\begin{gather}\label{decay0}
\big|\left(e^{-it{\bf\Delta}}\right)_{nm,kl}\big| \leq  C_d |t|^{-\frac
d2} ,\qquad |t|\geq 1 .
\end{gather}
\end{lemma}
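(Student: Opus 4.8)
The plan is to reduce the estimate first to the one-dimensional case and then to apply the Jacobi representation of Lemma~\ref{jacobi} together with a sharp uniform bound on Jacobi polynomials. Since ${\bf\Delta}=2\sum_{k=1}^d\mathrm{ad}\,a_k^*\,\mathrm{ad}\,a_k$ is a sum of mutually commuting operators, each acting only on the $k$-th factor of $L^2(\mathbb R^d)=\bigotimes_{k=1}^d L^2(\mathbb R)$, the free evolution factorizes as $e^{-it{\bf\Delta}}=\bigotimes_{k=1}^d e^{-it{\bf\Delta}_1}$, where ${\bf\Delta}_1$ denotes the one-dimensional operator. Hence the matrix elements factorize, $(e^{-it{\bf\Delta}})_{nm,kl}=\prod_{j=1}^d (e^{-it{\bf\Delta}_1})_{n_jm_j,k_jl_j}$, and it suffices to establish $|(e^{-it{\bf\Delta}_1})_{nm,kl}|\le C_1|t|^{-1/2}$ for $d=1$ and $|t|\ge 1$; the general bound then follows with $C_d=C_1^d$.

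Restricting to $d=1$, I would use the symmetries of the modulus of the matrix element to reach the index range covered by Lemma~\ref{jacobi}. From \eqref{mateluni} one reads off directly that $(e^{-it{\bf\Delta}})_{nm,kl}=(e^{-it{\bf\Delta}})_{kl,nm}$, and since the antilinear map $\phi\mapsto\phi^*$ commutes with ${\bf\Delta}$ one gets $|(e^{-it{\bf\Delta}})_{nm,kl}|=|(e^{-it{\bf\Delta}})_{mn,lk}|$. These two operations generate a group of order four whose orbit of $(n,m,k,l)$ has last entries $l,k,m,n$; we may therefore assume $l=\min\{n,m,k,l\}$, whereupon the constraint $m+k=n+l$ forces $n=\max\{n,m,k,l\}$, so that $l\le m$ and $n\ge m$ and Lemma~\ref{jacobi} applies. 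Taking absolute values in \eqref{jacrep} and using $|it|=|t|$, $|1+it|=(1+t^2)^{1/2}$ and $(1+t^{-2})^l=(1+t^2)^l/|t|^{2l}$ gives
\[
\big|(e^{-it{\bf\Delta}})_{nm,kl}\big|=\sqrt{\frac{n!\,l!}{m!\,k!}}\;|t|^{m-l}(1+t^2)^{l-(m+k+1)/2}\,\big|P_l^{\alpha,\beta}(X)\big|,
\]
with $\alpha=n-m\ge0$, $\beta=m-l\ge0$ and $X=(t^2-1)/(t^2+1)\in[0,1)$.

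The heart of the matter is a uniform bound on $P_l^{\alpha,\beta}$. I would invoke an Erd\'elyi--Magnus--Nevai / Krasikov type weighted estimate $\sup_{X\in(-1,1)}(1-X)^{\alpha+1/2}(1+X)^{\beta+1/2}\big(P_l^{\alpha,\beta}(X)\big)^2\le M_l^{\alpha,\beta}$, valid with a constant independent of $\alpha,\beta,l$ and an explicit $M_l^{\alpha,\beta}$ given by a ratio of Gamma functions. This yields pointwise, for all $X\in[0,1)$, the bound $|P_l^{\alpha,\beta}(X)|\le (M_l^{\alpha,\beta})^{1/2}(1-X)^{-\alpha/2-1/4}(1+X)^{-\beta/2-1/4}$. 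Substituting $1-X=2/(t^2+1)$ and $1+X=2t^2/(t^2+1)$, and using the relations $l+\alpha=k$, $l+\beta=m$, $l+\alpha+\beta=n$ (so that the Gamma functions in $M_l^{\alpha,\beta}$ turn into $k!,\,m!,\,n!,\,l!$), all factorials cancel against the prefactor $\sqrt{n!l!/(m!k!)}$. A short bookkeeping then shows that the powers of $|t|$ collapse to $|t|^{-1/2}$, because $m-l-\beta=0$, while the powers of $1+t^2$ cancel exactly, because $l-\frac{m+k+1}{2}+\frac{\alpha+\beta}{2}+\frac12=0$. One is left with the bound $C|t|^{-1/2}$ (in fact with an additional harmless decaying factor $(m+k+1)^{-1/2}$), uniformly in the indices and valid for all $|t|\ge1$.

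The main obstacle is to secure the Jacobi estimate in a form that is simultaneously uniform in the two parameters $\alpha,\beta$ and in the degree $l$, and sharp enough to reproduce exactly the exponent $-1/2$. Both the factorial prefactor $\sqrt{n!l!/(m!k!)}$ and the endpoint value $P_l^{\alpha,\beta}(1)={l+\alpha\choose l}$ grow with the indices, so only an estimate carrying the correct $M_l^{\alpha,\beta}$-normalization and the precise weight exponents $(1-X)^{-\alpha/2-1/4}(1+X)^{-\beta/2-1/4}$ produces the cancellations above; a cruder bound (such as the endpoint bound $|P_l^{\alpha,\beta}(X)|\le{l+\alpha\choose l}$) loses the oscillatory decay and fails to be uniform. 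Once this uniform weighted bound is in hand, the pointwise character of the supremum estimate makes it applicable on all of $[0,1)$ at once, so that no separate analysis of the transition region near $X=1$ (that is $|t|\sim l$) is required, and the clean decay $C_d|t|^{-d/2}$ follows.
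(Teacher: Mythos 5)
Your proposal is correct and follows essentially the same route as the paper's proof: reduction to $d=1$ by tensor factorization, the symmetry reduction to the index range $l\leq m,k\leq n$ where Lemma~\ref{jacobi} applies, and then the Erd\'elyi--Magnus--Nevai uniform weighted bound on Jacobi polynomials, whose normalization constant (the ratio of Gamma functions) produces exactly the cancellation of the factorial prefactor and of the powers of $1+t^2$, leaving $|t|^{-1/2}$ times the uniformly bounded ratio $\bigl((2+\sqrt{\alpha^2+\beta^2})/(2l+\alpha+\beta+1)\bigr)^{1/2}$. The only cosmetic difference is that the paper first passes to the orthonormal polynomials ${\bf P}_l^{\alpha,\beta}$ before invoking the bound, whereas you keep the standard normalization and carry the Gamma-ratio explicitly.
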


\begin{proof} From the def\/inition of ${\bf\Delta}$ it follows that the
  matrix elements in question factorize into those for the
  one-dimensional case, so it suf\/f\/ices to consider $d=1$.

 Since $e^{-it{\bf\Delta}}$ is unitary, it follows from
  \eqref{mateluni} that
 \[
\left(e^{-it{\bf\Delta}}\right)_{nm,kl} =  \left(e^{-it{\bf\Delta}}\right)_{kl,nm}.
\]
 Moreover, since
$W(\overline f)=W(f)^*$ and the heat kernel on $\mathbb R^2$ is
symmetric in its two arguments we likewise have
 \[
\left(e^{-it{\bf\Delta}}\right)_{nm,kl} =  \left(e^{-it{\bf\Delta}}\right)_{mn,lk} .
\]
These symmetries may also be checked directly from \eqref{mateluni} and \eqref{matconst}.

Taking into account that $m+k=n+l$ for non-vanishing matrix elements
we can therefore assume that $l\leq m,k\leq n$. Lemma~\ref{jacobi} then yields
\begin{gather}
\label{estimategene}
\left|\left(e^{-it{\bf\Delta}}\right)_{nm,kl}\right|
 = \delta_{m+k,n+l}
\sqrt{\frac{n! l!}{m! k!}}(1+t^2)^{-1/2}
\big(1+t^{-2}\big)^{(l-n)/2}|t|^{l-k}
\left|P_l^{\alpha,\beta}\left(\frac{t^2-1}{t^2+1}\right)\right|,
\end{gather}
where we have set
\begin{gather*}%\label{alphabeta}
\alpha:=n-m \qquad\mbox{and}\qquad \beta:=m-l .
\end{gather*}
Introducing the orthonormal Jacobi polynomials ${\bf
  P}_l^{\alpha,\beta}$ of degree $l\geq 0$ by normalizing w.r.t.\ the
$L^2$-norm def\/ined by the weight \eqref{weight}, one f\/inds (see e.g.~\cite[p.~67]{Szego})
\[
{\bf P}_l^{\alpha,\beta}(X)  =  \sqrt{\frac{(2l+\alpha +\beta +1)}{2^{\alpha
    +\beta +1}}\frac{(l+\alpha +\beta)! l!}{(l+\alpha)! (l+\beta)!}}
P_l^{\alpha,\beta}(X) ,
\]
and \eqref{estimategene} can be rewritten as
\begin{gather}
\left|\left(e^{-it{\bf\Delta}}\right)_{nm,kl}\right|
 = \delta_{m+k,n+l}\left(l+\frac{\alpha +\beta +1}2\right)^{-\frac 12}\nonumber\\
 \hphantom{\left|\left(e^{-it{\bf\Delta}}\right)_{nm,kl}\right|=}{}\times
\big(1+t^2\big)^{-1/2} (1-X)^{\alpha/2}(1+X)^{\beta/2}
\big|{\bf P}_l^{\alpha,\beta}(X)\big| ,\label{estimategene2}
\end{gather}
with $X$ given by \eqref{X-t}.

We now use the following uniform bound for the orthonormal Jacobi polynomials,
proven in~\cite{erdel}:
\begin{gather}\label{erdelyi}
(1-X)^{\alpha/2+1/4}
(1+X)^{\beta/2+1/4}\big|{\bf P}_l^{\alpha,\beta}(X)\big| \leq \sqrt{\frac{2 e}{\pi}}\sqrt{2+
\sqrt{\alpha^2+\beta^2}} ,
\end{gather}
valid for all $X\in]-1,1[$, $l\geq 0 $ and
$\alpha, \beta\geq-1/2$.

Applying this inequality in conjunction with \eqref{estimategene2} we
obtain, for $m,n,k,l\geq 0$ and   $|t|\geq 1$,
\begin{gather*}%\label{matest2}
\left|\left(e^{-it{\bf\Delta}}\right)_{mn,kl}\right|
 \leq C \delta_{m+k,n+l}
\left(\frac{2+
\sqrt{\alpha^2+\beta^2}}
{2l+\alpha +\beta +1}\right)^{\frac 12}
|t|^{-\frac 12}\leq C' \delta_{m+k,n+l} |t|^{-\frac 12} ,
\end{gather*}
which is the announced result.
\end{proof}

In a recent article \cite{Krasikov}, Krasikov has improved the bound
\eqref{erdelyi} to
\begin{gather*}%\label{krasi2}
(1-X)^{\alpha/2+1/4}
(1+X)^{\beta/2+1/4}\big|{\bf P}_l^{\alpha,\beta}(X)\big| \leq \sqrt 3
 \alpha^{1/6}\left(1+\frac\alpha l\right)^{1/12} ,
\end{gather*}
valid for $X\in]-1,1[$, $l\geq 6$ and
$\alpha\geq \beta\geq(1+\sqrt2)/4$. Apart from the restricted domain
of validity, which presumably can be extended to $\alpha, \beta, l\geq 0$,
it is not clear whether it may lead to a stronger decay estimate than~\eqref{decay1}. In particular, we do not know whether the decay exponent~$d/2$ in~\eqref{decay0} is optimal.
It is, on the other hand, easy to obtain improved uniform bounds on
diagonal matrix elements as demonstrated by the following
Lemma, to be used in the treatment of the diagonal case in
Section~\ref{secdiag}.

\begin{lemma}
\label{logestimate}
For any $d\geq 1$ there exists a constant $C'_d$, independent of $n,m
\in\mathbb N_0^d$ and
$t$, such that
\begin{gather}\label{decay00}
|\left(e^{-it{\bf\Delta}}\right)_{nn,mm}| \leq  C'_d |t|^{-d}(1+\log
|t|)^d ,\qquad |t|\geq 1 .
\end{gather}
\end{lemma}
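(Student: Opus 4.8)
The plan is to reduce everything to the one-dimensional case and then to analyze the relevant Jacobi polynomials separately in an ``endpoint'' and a ``bulk'' regime. Since ${\bf\Delta}=2\sum_{k=1}^d\mbox{ad}\,a_k^*\,\mbox{ad}\,a_k$ splits as a sum over the $d$ independent modes, the free evolution factorizes as $e^{-it{\bf\Delta}}=\bigotimes_{k=1}^d e^{-it{\bf\Delta}_k}$, and the diagonal basis operator $|n\rangle\langle n|=\bigotimes_k|n_k\rangle\langle n_k|$ factorizes accordingly. Hence the diagonal matrix element factorizes into a product over $k=1,\dots,d$ of one-dimensional diagonal elements, exactly as in the proof of Lemma~\ref{Uniest1}. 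It therefore suffices to prove the estimate for $d=1$, namely $|(e^{-it{\bf\Delta}})_{nn,mm}|\le C_1'|t|^{-1}(1+\log|t|)$ for $|t|\ge1$, and to take the $d$-fold product, which yields the stated bound with $C_d'=(C_1')^d$.

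For $d=1$, the symmetries $(e^{-it{\bf\Delta}})_{nn,mm}=(e^{-it{\bf\Delta}})_{mm,nn}$ recorded in the proof of Lemma~\ref{Uniest1} allow me to assume $m\le n$. Specializing \eqref{jacrep} to the diagonal (so that the Jacobi parameters become $\alpha=0$, $\beta=n-m$ and the degree equals $m$) and writing $\cos\theta=(t^2-1)/(t^2+1)$, equivalently $\tan(\theta/2)=1/|t|$, a direct computation gives
\[
|(e^{-it{\bf\Delta}})_{nn,mm}| = \sin(\theta/2)\,\cos^{n-m}(\theta/2)\,\big|P_m^{0,n-m}(\cos\theta)\big|.
\]
The crucial feature, absent in the general case of Lemma~\ref{Uniest1}, is that one Jacobi parameter is now $0$. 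Since $\sin(\theta/2)=(1+t^2)^{-1/2}\le|t|^{-1}$, one full power of $|t|^{-1}$ is free, and the problem reduces to showing that $\cos^{n-m}(\theta/2)\,\big|P_m^{0,n-m}(\cos\theta)\big|\le C(1+\log|t|)$ uniformly in $m\le n$ and $|t|\ge1$. Note that $\theta\sim 2/|t|$ as $|t|\to\infty$, so $X=\cos\theta$ lies in the endpoint region near $+1$.

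I would then split according to whether $m\theta\le1$ or $m\theta\ge1$. In the endpoint regime $m\theta\le1$, Mehler--Heine--type endpoint bounds give $\big|P_m^{0,n-m}(\cos\theta)\big|=O(1)$; combined with $\cos^{n-m}(\theta/2)\le1$ this already yields a bound of order $\sin(\theta/2)$, that is of order $|t|^{-1}$, with no logarithm. In the bulk regime $m\theta\ge1$, the classical Darboux/Szeg\H{o} oscillatory asymptotics bound $\big|P_m^{0,n-m}(\cos\theta)\big|$ by a constant times $m^{-1/2}(\sin\tfrac\theta2)^{-1/2}(\cos\tfrac\theta2)^{-(n-m)-1/2}$; the dangerous factor $(\cos\tfrac\theta2)^{-(n-m)}$ is exactly absorbed by the prefactor $\cos^{n-m}(\theta/2)$, leaving a quantity of order $m^{-1/2}(\sin\tfrac\theta2)^{1/2}(\cos\tfrac\theta2)^{-1/2}$, which for $m\ge1/\theta$ is again of order $|t|^{-1}$, once more without a logarithm.

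The main obstacle is uniformity. The oscillatory (Darboux) asymptotics are classically stated for fixed parameters, whereas here $\beta=n-m$ ranges over all of $\mathbb N_0$, so their implied constants are not uniform in $\beta$; and the whole point of the lemma is a bound uniform in $n$, $m$ and $t$. The uniform bound \eqref{erdelyi} used in Lemma~\ref{Uniest1} is available for all parameters but, when fed into the identity above, only reproduces the $|t|^{-1/2}$ decay (its right-hand side $\sqrt{2+\sqrt{\alpha^2+\beta^2}}$ is precisely compensated by the orthonormalization), so it is not strong enough to reach $|t|^{-1}$. Thus the heart of the matter is to control the Jacobi factor uniformly across the transition region $m\theta\sim1$: matching the endpoint and bulk estimates with constants independent of $m$, $n$ and $t$ is what forces the logarithmic loss, and the clean power $(1+\log|t|)^d$ then results from the $d$-fold product. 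I would not expect this logarithm to be sharp, in line with the remark following the lemma that the exponent $d/2$ in \eqref{decay0} is not known to be optimal.
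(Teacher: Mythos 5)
Your reduction to $d=1$, the diagonal specialization of \eqref{jacrep} to $\sin(\theta/2)\,\cos^{n-m}(\theta/2)\,|P_m^{0,n-m}(\cos\theta)|$ with $\tan(\theta/2)=1/|t|$, and the observation that the prefactor $\sin(\theta/2)=(1+t^2)^{-1/2}$ already supplies the full power $|t|^{-1}$ all coincide with the paper's argument. You have also correctly located the crux: everything hinges on a bound for $\cos^{\beta}(\theta/2)\,|P_m^{0,\beta}(\cos\theta)|$ of size $O(1+|\log\theta|)$ that is \emph{uniform} in the degree $m$, the parameter $\beta=n-m$ and $\theta$. But at exactly that point the proposal stops. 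Mehler--Heine and Darboux asymptotics, as you yourself concede, carry constants that are not uniform in $\beta$, and neither of them controls the transition region $m\theta\sim 1$; the uniform bound \eqref{erdelyi} only returns the $|t|^{-1/2}$ decay of Lemma~\ref{Uniest1}. Saying that ``matching the endpoint and bulk estimates \dots\ is what forces the logarithmic loss'' is a description of the difficulty, not an argument: no mechanism producing either the uniformity or the logarithm is exhibited. The proof is therefore incomplete precisely at its essential step.

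The ingredient the paper uses to close this gap is the classical contour-integral representation of $P_l^{\alpha,\beta}$ (Szeg\H{o}, p.~70). Taking the contour to be the unit circle and bounding the denominators from below by $|1-e^{i\varphi}+R(\theta,\varphi)|\ge 2\sin(\theta/2)$ and $|1+e^{i\varphi}+R(\theta,\varphi)|\ge 2\cos(\theta/2)$ gives, uniformly in $l$, $\alpha$, $\beta\ge 0$,
\[
\Bigl(\sin\tfrac\theta2\Bigr)^{\alpha}\Bigl(\cos\tfrac\theta2\Bigr)^{\beta}\bigl|P_l^{\alpha,\beta}(\cos\theta)\bigr|
\le\frac1{2\pi}\int_0^\pi\Bigl|\sin\tfrac{\varphi+\theta}2\,\sin\tfrac{\varphi-\theta}2\Bigr|^{-1/2}d\varphi
\le C_1+C_2|\log\theta| ,
\]
where the logarithm arises transparently from the integrable singularity of the last integrand at $\varphi=\theta$. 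This single estimate replaces your entire endpoint/bulk case analysis and settles the uniformity question in one stroke; specializing to $\alpha=0$, $\beta=n-m$, $l=m$ and multiplying by $\sin(\theta/2)\le|t|^{-1}$ yields \eqref{decay00}. To complete your proof you must supply this bound or an equally uniform substitute; without it the lemma is not established.
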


\begin{proof}
Again, we may assume that $d=1$. We then obtain from \eqref{estimategene}
\[
\left|\left(e^{-it{\bf\Delta}}\right)_{nn,mm}\right|
 = (1+t^2)^{-1/2}
\big(1+t^{-2}\big)^{-\beta/2}
\left|P_l^{0,\beta}\left(\frac{t^2-1}{t^2+1}\right)\right|,\qquad \mbox{with} \quad \beta=n-m ,
\]
and \eqref{decay00} will follow once we show that
there exist constants $C_1$, $C_2$, independent of $\alpha,\beta, l\in\mathbb
N_0$ and $X\in]-1,1[$, such that
\[
\big|P_l^{\alpha,\beta}(X)\big|\leq  \left(\frac2{1-X}\right)^{\alpha/2}
\left(\frac2{1+X}\right)^{\beta/2} \big(C_1 +
C_2\big|\log(1-|X|)\big|\big)  .
\]
This alternative estimate on the Jacobi
polynomials relies on a very standard integral representation. We give a
detailed proof for the sake of completeness.
Taking into account the symmetry
relation
\begin{gather*}
%\label{sym}
P_l^{\alpha,\beta}(X)  =  (-1)^l\,P_l^{\beta,\alpha}(-X) ,
\end{gather*}
 it suf\/f\/ices to
  consider $X\in [0,1[$.
Setting $X=\cos\theta$, $\theta\in\, ]0,\frac{\pi}{2}]$,
the assertion we want to prove is equivalent to
\[
\left(\sin\frac\theta2\right)^\alpha \left(\cos\frac\theta2\right)^\beta
\big|P_l^{\alpha,\beta}(\cos\theta)\big|\ \leq  C_1 + C_2 |\log\theta | .
\]
For $X\ne\pm1$, we can use the following integral representation, see e.g.~\cite[p. 70]{Szego}:
\[
P_l^{\alpha,\beta}(X)=\frac{2^{\alpha+\beta}}{2\pi i}
\int_\Gamma\frac{dz}{z^{l+1}} \frac{R(X,z)^{-1}}
{\big(1-z+R(X,z)\big)^\alpha\big(1+z+R(X,z)\big)^\beta} ,
\]
where $R(X,z)=\sqrt{1-2Xz+z^2}$ and $\Gamma$ is a positively oriented
contour in the complex plane enclosing the point $X$. Choosing
for $\Gamma$ the unit circle centered at the origin, this yields
\[
P_l^{\alpha,\beta}(\cos\theta)=\frac{2^{\alpha+\beta}}{2\pi}
\int_{-\pi}^\pi d\varphi \frac{e^{-il\varphi} R(\theta,\varphi)^{-1}}
{\big(1-e^{i\varphi}+R(\theta,\varphi)\big)^\alpha\big(1+e^{i\varphi}+R(\theta,\varphi)\big)^\beta} ,
\]
where
\[
R(\theta,\varphi)=\sqrt{1-2\cos\theta e^{i\varphi}+e^{2i\varphi}}=
2\sqrt{-e^{i\varphi} \sin\frac{\varphi+\theta}2 \sin\frac{\varphi-\theta}2}
\]
and the complex square root is def\/ined in such a way  that
$\sqrt1=1$. Using a trigonometric identity one f\/inds
\[
R(\theta,\varphi)=\begin{cases}
\displaystyle 2 e^{i\varphi/2} \left|\sin\frac{\varphi+\theta}2 \sin\frac{\varphi-\theta}2\right|^{1/2}, \quad  &
\displaystyle  \mbox{if}\quad \left|\sin\frac\varphi2\right|\leq\sin\frac\theta2 ,\vspace{1mm}\\
\displaystyle  -2i e^{i\varphi/2} \left|\sin\frac{\varphi+\theta}2 \sin\frac{\varphi-\theta}2\right|^{1/2},\quad  &
\displaystyle  \mbox{if}\quad \sin\frac\varphi2\geq\sin\frac\theta2 ,\vspace{1mm}\\
\displaystyle  2i e^{i\varphi/2} \left|\sin\frac{\varphi+\theta}2 \sin\frac{\varphi-\theta}2\right|^{1/2},\quad &
\displaystyle   \mbox{if}\quad \sin\frac\varphi2\leq-\sin\frac\theta2 ,
\end{cases}
\]
and thus
\begin{gather*}
\big|1-e^{i\varphi}+R(\theta,\varphi)|
 =2\left|\sin\frac\varphi2+\frac i2 e^{-i\varphi/2} R(\theta,\varphi)\right|\\
\qquad =\begin{cases}
\displaystyle 2\left(\sin^2\frac\varphi2+\left|\sin\frac{\varphi+\theta}2
\sin\frac{\varphi-\theta}2\right|\right)^{1/2}=2\sin\frac\theta2, \quad&
\displaystyle  \mbox{if}\quad
\left|\sin\frac\varphi2\right|\leq\sin\frac\theta2 ,\vspace{1mm}\\
\displaystyle  2\left(\left|\sin\frac\varphi2\right|+\left|\sin\frac{\varphi+\theta}2
\sin\frac{\varphi-\theta}2\right|\right)\geq2\sin\frac\theta2, \quad &
\displaystyle  \mbox{if}\quad
\left|\sin\frac\varphi2\right|\geq\sin\frac\theta2 .
\end{cases}
\end{gather*}
Similarly, one establishes
\[
\big|1+e^{i\varphi}+R(\theta,\varphi)|\geq2\left|\cos\frac\theta2\right| .
\]
This eventually implies the following bound:
\begin{gather*}
\left(\sin\frac\theta2\right)^\alpha \left(\cos\frac\theta2\right)^\beta
\big|P_l^{\alpha,\beta}(X)\big|\\
\qquad{} \leq \frac1{2\pi}\int_{-\pi}^\pi d\varphi
\big|R(\theta,\varphi)\big|^{-1}
=\frac1{2\pi}\int_0^\pi d\varphi
\left|\sin\frac{\varphi+\theta}2 \sin\frac{\varphi-\theta}2\right|^{-1/2}.
\end{gather*}
Finally, it is easily seen that there exists an upper bound of the
form $ C_1 + C_2 |\log\theta |$ for the latter integral, thus
completing the proof.
\end{proof}

We are now ready to prove Theorem~\ref{th2}.

\begin{proof}[Proof of Theorem~\ref{th2}]
Recalling the def\/inition \eqref{Ualpha} of $U_\alpha$ and that
$b_{mn}$ only depends on $|n-m|$, it follows from the presence of the
Kronecker-delta factor in $(e^{-it{\bf{\Delta}}})_{nm,kl}$ that
\[
e^{-it{\bf{\Delta}}_\alpha}(|n\rangle\langle m|) = U_\alpha^{-1}
e^{-it{\bf{\Delta}}}U_\alpha(|n\rangle\langle m|) = e^{-it{\bf{\Delta}}}(|n\rangle\langle m|)
\]
is independent of $\alpha$. Since
$\{b_{mn}^{-\alpha}|n\rangle\langle m|\}$ is an orthonormal basis
for ${\cal H}_\alpha$ we obtain, for $\alpha\geq 0$ and
\[
\phi =
\sum_{m,n}\phi_{mn}|n\rangle\langle m| \in {\cal H}_\alpha\subseteq
{\cal H} ,
\]
that
\[
e^{-it{\bf{\Delta}}_\alpha}\phi =
\sum_{n,m}\phi_{mn}e^{-it{\bf{\Delta}}_\alpha}(|n\rangle\langle m|) =
\sum_{n,m}\phi_{mn}e^{-it{\bf{\Delta}}}(|n\rangle\langle m|) = e^{-it{\bf{\Delta}}}\phi ,
\]
i.e. $e^{-it{\bf{\Delta}}_\alpha}$ equals the restriction of $e^{-it{\bf{\Delta}}}$
to ${\cal H}_\alpha$.
Applying  Lemma~\ref{anorm} we then get, for $\beta>d$ and
$\phi\in {\cal H}_\alpha\cap {\cal L}_{1,\beta}$,
\begin{gather*}
\big\Vert e^{-it{\bf{\Delta}}_\alpha}\phi\big\Vert_a
 \leq
 C_\beta \sup_{n,m}\bigg|b_{nm}^{\beta}
\sum_{k,l}\big(e^{-it{\bf{\Delta}}}\big)_{nm,kl}\,\phi_{kl}\bigg|
 \leq C_\beta \sup_{n,m}\sum_{k,l}\Big|
\big(e^{-it{\bf{\Delta}}}\big)_{nm,kl} b_{kl}^{\beta} \phi_{kl}\Big|\\
\phantom{\big\Vert e^{-it{\bf{\Delta}}_\alpha}\phi\big\Vert_a}{}
\leq
C_\beta \sup_{n,m,k,l}\Big|\big(e^{-it{\bf{\Delta}}}\big)_{nm,kl}\Big| \sum_{k,l}\big|b_{kl}^{\beta} \phi_{kl}\big|
= C_\beta \sup_{n,m,k,l}\Big|\big(e^{-it{\bf{\Delta}}}\big)_{nm,kl}\Big|
\Vert\phi\Vert_{1,\beta} ,
\end{gather*}
where, in the second step, we have once more made use of the
Kronecker-delta factor in $(e^{-it{\bf{\Delta}}})_{nm,kl}$.
The claim now follows from Lemma~\ref{Uniest1}.
\end{proof}

\begin{remark}
The time-decay exponent $\frac d2$ found in  Lemma~\ref{Uniest1} equals half
the value of the one for $2d$-dimensional Euclidean space, which
is $d$. It is worth noting that the corresponding heat kernel actually exhibits the
same decay rate as for Euclidean space. In order to see this it
suf\/f\/ices to note that
\[
C_{nm,kl,v}\leq\sqrt{{n+l\choose n-m+2v}{m+k\choose n-m+2v}}=
{m+k\choose n-m+2v} ,
\]
for $m+k=n+l$. Hence, for $t>0$,
\begin{gather*}
\left(e^{-t{\bf{\Delta}}}\right)_{nm,kl} \leq
\delta_{m+k,n+l}\sum_{v=0}^{\min\{m,l\}} {m+k\choose n-m+2v}
 \frac{t^{m+l-2v}}{(1+t)^{m+k+d}}\\
\hphantom{\left(e^{-t{\bf{\Delta}}}\right)_{nm,kl}}{}
= \delta_{m+k,n+l}\frac{t^{m+k}}{(1+t)^{m+k+d}}
\sum_{v=0}^{\min\{m,l\}} {m+k\choose n-m+2v}
 t^{-(n-m+2v)}\\
\hphantom{\left(e^{-t{\bf{\Delta}}}\right)_{nm,kl}}{}
 \leq \frac{t^{m+k}}{(1+t)^{m+k+d}}
\sum_{w=0}^{m+k}\,{m+k\choose w} t^{-w}\\
\hphantom{\left(e^{-t{\bf{\Delta}}}\right)_{nm,kl}}{}
= \frac{t^{m+k}}{(1+t)^{m+k+d}}\big(1+t^{-1}\big)^{m+k} = (1+t)^{-d} .
\end{gather*}
\end{remark}

\section{Existence of wave operators for small data}
\label{secth3}

\begin{proof}[Proof of Theorem~\ref{th3}.]
~~Both statements follow from~ \cite[Theorem 16]{Reed} once we verify
the following four conditions for $\alpha> 2d$ and some $\delta>0$,
where $p$ denotes the lowest degree of monomials occurring in $F$:
\begin{itemize}\itemsep=0pt
\item[i)] There exists a constant $c_1>0$ such that
\[
\Vert e^{-it{\bf{\Delta}}_\alpha}\phi\Vert_a\leq
c_1 |t|^{-d/2} \Vert\phi\Vert_{1,\alpha/2},\qquad |t|\geq 1 .
\]
\item[ii)] There exists a constant $c_2>0$ such that
\[
\Vert\phi\Vert_a\leq c_2 \Vert\phi\Vert_{2,\alpha} .
\]
\item[iii)] There  exists a constant $c_3>0$ such that
\[
\Vert\phi F(\phi^*\phi)-\psi F(\psi^*\psi)\Vert_{2,\alpha}
\leq c_3\big(\Vert\phi\Vert_a+\Vert\psi\Vert_a\big)^{2p-1}
\Vert\phi-\psi\Vert_{2,\alpha} ,\qquad \mbox{if}\ \ \Vert\phi\Vert_{2,\alpha},
\Vert\psi\Vert_{2,\alpha}\leq\delta .
\]
\item[iv)]  There  exists a constant $c_4>0$ such that
\begin{gather*}
\Vert\phi F(\phi^*\phi)-\psi F(\psi^*\psi)\Vert_{1,\alpha/2}\\
\qquad {} \leq c_4\Big\{\big(\Vert\phi\Vert_a+\Vert\psi\Vert_a\big)^{2p-2}
\Vert\phi-\psi\Vert_a
  +
\big(\Vert\phi\Vert_a+\Vert\psi\Vert_a\big)^{2p-1}
\Vert\phi-\psi\Vert_{2,\alpha}\Big\} ,\\
\qquad \qquad \mbox{if} \ \Vert\phi\Vert_{2,\alpha},
\Vert\psi\Vert_{2,\alpha}\leq\delta .
\end{gather*}
\end{itemize}
Moreover, it is required that the constants $c_3$, $c_4$ can be chosen
arbitrarily small by choosing $\delta$ small enough.

To be specif\/ic, the stated basic assumptions about $F$ ensure that
 $2p-1\geq 1$ and $\frac d2(2p-1)> 1$ for all
$d\geq 1$, which implies \eqref{solsmall1} by  \cite[Theorem
16]{Reed}. If, in addition, $F$ has no linear term, we have $2p-1>1$
and \eqref{solsmall2} follows similarly.

Condition i) above is a particular case of Theorem~\ref{th2}. That
condition ii) holds, follows from
\begin{gather*}%\label{alessalpha}
\Vert\phi\Vert_a^2 \leq \Vert\widetilde\phi\Vert^2_2 = \Vert\phi\Vert_{2,0}^2 \leq
\sum_{m,n}b_{mn}^{2\alpha}\big|\phi_{mn}\big|^2 = \Vert\phi\Vert^2_{2,\alpha},\qquad \alpha\geq0,
\end{gather*}
where we have used that the Hilbert--Schmidt norm dominates the operator
norm in the f\/irst step and that $b^{2\alpha}_{mn}\geq 1$ for
$\alpha\geq 0$ in the third step.

To establish iii) and iv) we make use of Lemma~\ref{product}. As a
consequence of the inequality
$\Vert\phi\,\psi\Vert_a\leq\Vert\phi\Vert_a \Vert\psi\Vert_a$, which
follows immediately from
\[
(\widetilde{\phi\psi})_{mn} = \bigg|\sum_k\phi_{mk} \psi_{kn}\bigg| \leq \sum_k|\phi_{mk}|\,|\psi_{kn}| = (\widetilde\phi \widetilde\psi)_{mn} ,
\]
the f\/irst part of the Lemma implies
\begin{gather}
\label{prodnorm1}
\Vert\phi_1\cdots\phi_r\Vert_{2,\alpha}\leq
C(\alpha,r)  \sum_{i=1}^r\Vert\phi_i\Vert_{2,\alpha} \prod_{j\ne i}\Vert\phi_j\Vert_a ,
\end{gather}
valid for $\alpha\geq 0$ and some constant $C(\alpha,r)$ depending
on $\alpha$ and $r$ only.

The second statement of Lemma~\ref{product} implies, for $\alpha>2d$,
\begin{gather*}%\label{prodnorm2a}
\Vert\phi_1\cdots\phi_r\Vert_{1,\alpha/2} \leq
C_{2,\alpha/2} \bigg(\Vert\phi_1\Vert_{2,\alpha} \bigg\Vert\prod_{i=2}^r\phi_i\bigg\Vert_2 +
 \Vert\phi_1\Vert_2 \bigg\Vert\prod_{i=2}^r\phi_i\bigg\Vert_{2,\alpha}\bigg)\\
 \hphantom{\Vert\phi_1\cdots\phi_r\Vert_{1,\alpha/2}}{}
 \leq
2C_{2,\alpha/2} \Vert\phi_1\Vert_{2,\alpha} \bigg\Vert\prod_{i=2}^r\phi_i\bigg\Vert_{2,\alpha}
\end{gather*}
 where, in the second step, it has been used that $\Vert\phi\Vert_{2,\alpha}$ is an
increasing function of $\alpha$.

The last expression can now be estimated by making use of~\eqref{prodnorm1}. Thus we obtain
\begin{gather}\label{prodnorm2}
\Vert\phi_1\cdots\phi_r\Vert_{1,\alpha/2}
\leq 2 C(\alpha,r-1)C_{2,\alpha/2} \Vert\phi_1\Vert_{2,\alpha} \sum_{i=2}^r\Vert\phi_i\Vert_{2,\alpha}
\prod_{j\ne 1,i}\Vert\phi_j\Vert_a .
\end{gather}
Now, let $p$ be an arbitrary positive integer and write
\begin{gather*}
\phi |\phi|^{2p}-\psi |\psi|^{2p}
=\sum_{i=0}^{p-1}\phi |\phi|^{2(p-1-i)}
(\phi^* (\phi-\psi)+(\phi^*-\psi^*) \psi) |\psi|^{2i}
+(\phi-\psi) |\psi|^{2p} .
\end{gather*}
Since the norms $\Vert\cdot\Vert_{2,\alpha}$, $\Vert\cdot\Vert_a$, $\Vert\cdot\Vert_{1,\alpha}$
are  $*$-invariant, an application of~\eqref{prodnorm1}
and~\eqref{prodnorm2} yields the inequalities
\begin{gather*}
\Vert\phi |\phi|^{2p}-\psi |\psi|^{2p}\Vert_{2,\alpha}
\leq c'_3 \big(\Vert\phi\Vert_{2,\alpha}+\Vert\psi\Vert_{2,\alpha}\big)
\big(\Vert\phi\Vert_a+\Vert\psi\Vert_a\big)^{2p-1}\Vert\phi-\psi\Vert_{2,\alpha} ,
\end{gather*}
and
\begin{gather*}
 \Vert\phi |\phi|^{2p}-\psi |\psi|^{2p}\Vert_{1,\alpha/2}
 \leq c'_4
\Big\{\big(\Vert\phi\Vert_{2,\alpha}+\Vert\psi\Vert_{2,\alpha}\big)^2
\big(\Vert\phi\Vert_a+\Vert\psi\Vert_a\big)^{2p-2}\Vert\phi-\psi\Vert_a\\
\phantom{\Vert\phi |\phi|^{2p}-\psi |\psi|^{2p}\Vert_{1,\alpha/2}\leq}{}
+
\big(\Vert\phi\Vert_{2,\alpha}+\Vert\psi\Vert_{2,\alpha}\big)
\big(\Vert\phi\Vert_a+\Vert\psi\Vert_a\big)^{2p-1}\Vert\phi-\psi\Vert_{2,\alpha}\Big\} ,
\end{gather*}
respectively, for suitable constants $c'_3$, $c'_4$, where the
obvious inequality
\[
\Vert\phi\Vert_a \leq \Vert\phi\Vert_{2,\alpha} ,\qquad \alpha\geq 0 ,
\]
has also been used. This evidently proves iii) and iv), with $c_3$ and
$c_4$ of order $\delta$, if $F$ is a~monomial of degree $p$. The same bounds then follow immediately for an
arbitrary polynomial~$F$ without constant term, if~$p$ denotes the
lowest degree of monomials occurring in~$F$.
\end{proof}

\begin{remark}\label{remscat}
As pointed out in \cite[Theorem 17]{Reed}, the
assumptions of Theorem~\ref{th3} also ensure the existence of a
scattering operator def\/ined for suf\/f\/iciently small initial data. More
precisely, if $\delta_0>0$ is small enough, there exists, for each
$\phi_-\in\Sigma_\alpha$ with $\Vert\phi_-\Vert_{2,\alpha}\leq \delta_0$, a unique
$\phi_+\in\Sigma_\alpha$ with $\Vert\phi_-\Vert_{2,\alpha}\leq 2\delta_0$, such
that the solution $\phi^-(t)$ to \eqref{NLSch4-} fulf\/ills
\[
\Vert\phi^-(t)- e^{-it{{\bf{\Delta}}_\alpha}}\phi_+\Vert_{2,\alpha}  \to
0,\qquad\mbox{for}\ \ t\to +\infty ,
\]
and the so def\/ined mapping
$S: \{\phi_-\in\Sigma_\alpha\,|\;\Vert\phi_-\Vert_{2,\alpha}\leq
\delta_0\}\to\{\phi_+\in\Sigma_\alpha\,|\;\Vert\phi_+\Vert_{2,\alpha}\leq 2\delta_0\}$,
called the {\it scattering operator},
is injective and continuous w.r.t.\ the topology def\/ined by
$\Vert\cdot\Vert_{2,\alpha}$.
\end{remark}

\section{The diagonal case}
\label{secdiag}

In this f\/inal section we discuss brief\/ly equation~\eqref{NLSch} when
restricted to functions $\varphi(x,t)$ that correspond under the Weyl
map to operators that are diagonal w.r.t.\ the basis
$\{|n\rangle\}$. This means that the operators $\phi(t)$ commute with
the number operators $a_k^*a_k$, $k=1,\dots,d$. Since  $a_k^*a_k$ corresponds
under the Weyl map to the generator of rotations in the
$(x_k,x_{k+d})$-plane, the functions $\varphi(x,t)$ in question are
invariant under such rotations, i.e.\ under the action of the
$d$-fold product of $SO(2)$. Since, clearly, $\phi F(|\phi|^2)$ is
diagonal and Hilbert--Schmidt if $\phi$ is and ${\bf\Delta}$ naturally
restricts to a self-adjoint operator on the Hilbert subspace
\[
\mathcal H_{\rm diag} = \{\phi\in{\cal H}\,|\; \phi\;\, \mbox{diagonal}\},
\]
(see \cite{DJN}) it follows that the equations \eqref{NLSch2} and
\eqref{NLSch3} make sense as equations on $\mathcal H_{\rm diag}$.

We f\/irst note that {\it Theorem~\ref{th1} still holds with ${\cal H}$
replaced by $\mathcal H_{\rm diag}$}. In fact, the proof of a) applies without additional
changes and part b) follows likewise since the operator $\phi_0$ in
the proof obtained from \cite{DJN} is diagonal.

Concerning Theorem~\ref{th2} we have the stronger decay estimate
\begin{gather}\label{decay3}
\big|\big(e^{it{\bf\Delta}}\big)_{nn,mm}\big| \leq  C |t|^{-d}(1+\ln |t|)^d ,\qquad |t|\geq 1 ,
\end{gather}
from Lemma~\ref{logestimate} above as noted previously.
Note also that for $\phi\in\mathcal H_{\rm diag}$ the norms $\Vert\phi\Vert_{p,\alpha}$
are independent of $\alpha$ and the identities
\begin{align*}
\Vert\phi\Vert_a\;=\Vert\widetilde\phi\Vert_{\rm op}=\sup_n|\phi_{nn}|=\Vert\phi\Vert_{\rm op}\,,\qquad
\Vert\phi\Vert_{1,\alpha}=\sum_n|\phi_{nn}| =\Vert\phi\Vert_1
\end{align*}
hold. In view of \eqref{decay3} the decay estimate in Theorem~\ref{th2} is
hence replaced by
\begin{gather}\label{decaydiag}
\Vert e^{-it{\bf\Delta}}\phi\Vert_{\rm op} \leq
C |t|^{-d} (1+\log|t|)^d \Vert\phi\Vert_1 ,\qquad |t|\geq 1 .
\end{gather}
We therefore def\/ine the subspace of diagonal scattering states by
\begin{gather*}
\Sigma_{\rm diag}:=\left\{\phi\in\mathcal H_{\rm diag}\, |\; |||\phi|||<\infty\right\} ,
\end{gather*}
where
\begin{gather*}
|||\phi|||:=
\Vert\phi\Vert_2 +\sup_{|t|\geq 1}  |t|^d (1+\log |t|)^{-d}
\Vert e^{-it{\bf\Delta}}\phi\Vert_{\rm op} .
\end{gather*}

Using the well known inequalities
\begin{gather*}
\Vert\phi \psi\Vert_2 \leq \frac12\big(\Vert\phi\Vert_{\rm op} \Vert\psi\Vert_2 +
\Vert\phi\Vert_2\,\Vert\psi\Vert_{\rm op}\big)\qquad\mbox{and}\qquad
\Vert\phi \psi\Vert_1\leq\Vert\phi\Vert_2\,\Vert\psi\Vert_2
\end{gather*}
as a replacement for Lemma~\ref{product}, we obtain the estimates
\begin{gather*}
\Vert\phi |\phi|^{2p}-\psi |\psi|^{2p}\Vert_2
 \leq  C_1 \big(\Vert\phi\Vert_2+\Vert\psi\Vert_2\big)
\big(\Vert\phi\Vert_{\rm op}+\Vert\psi\Vert_{\rm op}\big)^{2p-1}\Vert\phi-\psi\Vert_2 ,\\
\Vert\phi |\phi|^{2p}-\psi |\psi|^{2p}\Vert_{1}
\leq C_2
\Big\{\big(\Vert\phi\Vert_2+\Vert\psi\Vert_2\big)^2
\big(\Vert\phi\Vert_{\rm op}+\Vert\psi\Vert_{\rm op}\big)^{2p-2}\Vert\phi-\psi\Vert_{\rm op}\\
\hphantom{\Vert\phi |\phi|^{2p}-\psi |\psi|^{2p}\Vert_{1} \leq}{}
+
\big(\Vert\phi\Vert_2+\Vert\psi\Vert_2\big)
\big(\Vert\phi\Vert_{\rm op}+\Vert\psi\Vert_{\rm op}\big)^{2p-1}\Vert\phi-\psi\Vert_2\Big\} ,
\end{gather*}
by arguments analogous to those in the proof of
Theorem~\ref{th3}. It hence follows that the conclusions of
Theorem~\ref{th3} hold in the diagonal case with $\Vert\cdot\Vert_2$ replacing
$\Vert\cdot\Vert_{2,\alpha}$ and $|||\cdot|||$ replacing
$|||\cdot|||_\alpha$, provided $F$ has no constant term and, in addition, no
linear term if $d=1$. According to the diagonal version of
Theorem~\ref{th2}, we have in this
case also global existence of solutions to the Cauchy problem on
$\mathcal H$. It then follows by standard arguments \cite[Theorem 19]{Reed}
that the restriction to small data at $\pm\infty$ can be dropped,
i.e.\ the wave operators $\Omega_\pm$ can be def\/ined on the full space
$\Sigma_{\rm diag}$ if $F$ has no linear term. More precisely we have:

\begin{theorem}\label{th4} Assume the polynomial $F$ is real and contains
no constant and linear terms. Then, for all $\phi_\pm\in\Sigma_{\rm diag}$,
equations \eqref{NLSch4+} and \eqref{NLSch4-} have unique globally
defined continuous solutions $\phi^\pm:\mathbb R\to \Sigma_{\rm diag}$ fulfilling
\begin{gather}
\Vert\phi^\pm(t)-e^{-it{\bf\Delta}}\phi_\pm\Vert_{2}  \to
0\qquad\mbox{for}\quad t\to\pm\infty ,\label{decay1}\\
 |||e^{it{\bf\Delta}}\phi^\pm(t)-\phi_\pm |||  \to
0\qquad\mbox{for}\quad t\to\pm\infty .\nonumber%\label{decay2}
\end{gather}
\end{theorem}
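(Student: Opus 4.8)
The plan is to deduce both assertions from the abstract scattering result \cite[Theorem 19]{Reed}, in complete parallel with the way Theorem~\ref{th3} was obtained from \cite[Theorem 16]{Reed}, the decisive new ingredient being the global existence of solutions to the Cauchy problem on $\mathcal H_{\rm diag}$ provided by the diagonal version of Theorem~\ref{th1}. It is precisely this global existence that allows one to dispense with the smallness restriction on the scattering data that was present in Theorem~\ref{th3}.

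First I would record that the four hypotheses demanded by \cite[Theorem 19]{Reed} are already available in the diagonal setting once $\Vert\cdot\Vert_{2,\alpha}$, $\Vert\cdot\Vert_a$ and $\Vert\cdot\Vert_{1,\alpha/2}$ are replaced throughout by $\Vert\cdot\Vert_2$, $\Vert\cdot\Vert_{\rm op}$ and $\Vert\cdot\Vert_1$, respectively. Indeed, the decay estimate \eqref{decaydiag} furnishes the free-propagation bound; the elementary inequality $\Vert\phi\Vert_{\rm op}\leq\Vert\phi\Vert_2$ furnishes the norm domination; and the two displayed estimates for $\Vert\phi|\phi|^{2p}-\psi|\psi|^{2p}\Vert_2$ and $\Vert\phi|\phi|^{2p}-\psi|\psi|^{2p}\Vert_1$ established just above supply the required Lipschitz bounds. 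Since $F$ has neither a constant nor a linear term, its lowest-order monomial has degree $p\geq 2$, whence the nonlinearity exponent satisfies $2p-1\geq 3>1$; this strict superlinearity is exactly what \cite[Theorem 19]{Reed} needs both to solve the equations \eqref{NLSch4-} and \eqref{NLSch4+} near $t=\pm\infty$ for arbitrary scattering data and to upgrade the convergence from \eqref{decay1} to the stronger statement in the $|||\cdot|||$-norm.

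The single point requiring genuine attention, and the step I expect to be the main obstacle, is the logarithmic correction $(1+\log|t|)^d$ occurring in \eqref{decaydiag} and built into the scattering norm $|||\cdot|||$, which is foreign to the clean power-law decay postulated in the abstract framework. I would therefore check that this factor does not spoil the time-integrability on which \cite[Theorem 19]{Reed} rests: inserting \eqref{decaydiag} into the Lipschitz estimates produces, for the nonlinear term evaluated along a nearly free trajectory, a decay of order $|t|^{-d(2p-1)}(1+\log|t|)^{d(2p-1)}$, and the integral $\int_1^\infty |t|^{-d(2p-1)}(1+\log|t|)^{d(2p-1)}\,dt$ converges because $d(2p-1)\geq 3>1$ and a polynomial decay of exponent strictly greater than $1$ dominates any fixed power of the logarithm. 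With this convergence secured, the abstract theorem applies directly and yields the unique global continuous solutions $\phi^\pm:\mathbb R\to\Sigma_{\rm diag}$, together with both asymptotic relations, the first being \eqref{decay1} and the second the stated convergence in $|||\cdot|||$.
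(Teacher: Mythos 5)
Your proposal follows essentially the same route as the paper: verify the diagonal analogues of the conditions from the proof of Theorem~\ref{th3} with $\Vert\cdot\Vert_2$, $\Vert\cdot\Vert_{\rm op}$, $\Vert\cdot\Vert_1$ in place of $\Vert\cdot\Vert_{2,\alpha}$, $\Vert\cdot\Vert_a$, $\Vert\cdot\Vert_{1,\alpha/2}$, invoke global existence on $\mathcal H_{\rm diag}$ from the diagonal version of Theorem~\ref{th1} to remove the smallness restriction, and conclude via \cite[Theorem 19]{Reed}. Your explicit check that the logarithmic factor in \eqref{decaydiag} does not destroy time-integrability (since $d(2p-1)\geq 3>1$ when $F$ has no constant or linear term) is a detail the paper leaves implicit, and is a welcome addition rather than a deviation.
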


\begin{remark}
This result allows us to def\/ine wave operators $\Omega_\pm:
\Sigma_{\rm diag}\to\Sigma_{\rm diag}$ by equation~\eqref{waveop} that are
injective and uniformly continuous on balls in $\Sigma_{\rm diag}$.
In general, $\Omega_\pm$ are not surjective: choosing $F$ to
satisfy the assumptions of Theorem~\ref{th1}$b)$ the oscillating
solution $\phi(t)=e^{i\omega t}\phi_0$ to the Cauchy problem (with
$t_0=0$) fulf\/ills $\Vert\phi(t)\Vert_1 = \Vert\phi_0\Vert_1 < \infty$
by \cite[Lemma 2]{DJN}. Hence, $\phi_0\in \Sigma_{\rm diag}$ by
\eqref{decaydiag}. On the other hand, since the unique solution
$\phi(t)$ to the Cauchy problem with initial value $\phi_0$ at $t=0$
has constant operator norm, it cannot fulf\/ill \eqref{decay1} for any
$\phi_\pm\in\Sigma_{\rm diag}$. Even in this case, an appropriate
characterization of the images of
$\Omega_\pm$ remains an open question.

\end{remark}

\pdfbookmark[1]{References}{ref}
\LastPageEnding


\begin{thebibliography}{99}

\footnotesize\itemsep=0pt

\bibitem{CaLi}
Cazenave  T., Lions P.L.,
Orbital stability of standing waves for some nonlinear Schr\"odinger equations,
\href{http://dx.doi.org/10.1007/BF01403504}{{\it Comm. Math. Phys.}} {\bf 85} (1982), 549--561.

\bibitem{Caz}
Cazenave T.,
Semilinear Schr{\"o}dinger equations,
{\it Courant Lecture Notes in Mathematics}, Vol.~10, Amer. Math. Soc., Providence, RI, 2003.

\bibitem{DJ}
Durhuus  B., Jonsson T.,
Noncommutative waves have inf\/inite propagation speed,
\href{http://dx.doi.org/10.1088/1126-6708/2004/10/050}{{\it J. High Energy Phys.}} {\bf 2004} (2004), no.~10, 050, 13~pages,
\href{http://arxiv.org/abs/hep-th/0408190}{hep-th/0408190}.

\bibitem{DJN}
Durhuus B., Jonsson T., Nest R.,
The existence and stability of noncommutative scalar solitons,
\href{http://dx.doi.org/10.1007/s00220-002-0721-4}{{\it Comm. Math. Phys.}} {\bf 233} (2003), 49--78,
\href{http://arxiv.org/abs/hep-th/0107121}{hep-th/0107121}.

\bibitem{erdel}
Erd{\'e}lyi T., Magnus  A.P., Nevai P.,
Generalized Jacobi weights, Christof\/fel functions, and Jacobi polynomials,
\href{http://dx.doi.org/10.1137/S0036141092236863}{{\it SIAM J. Math. Anal.}} {\bf 25} (1994), 602--614.

\bibitem{GGBISV}
Gayral V., Gracia-Bond\'{\i}a J.M., Iochum B., Sch\"ucker T.,  V\'arilly J.C.,
Moyal planes are spectral triples,
\href{http://dx.doi.org/10.1007/s00220-004-1057-z}{{\it Comm. Math. Phys.}} \textbf{246} (2004), 569--623,
\href{http://arxiv.org/abs/hep-th/0307241}{hep-th/0307241}.

\bibitem{Ginibre}
Ginibre J.,
An introduction to nonlinear Schr{\"o}dinger equations,
in Nonlinear Waves (Sappore 1995), Editors R.~Agemi, Y.~Giga and T.~Ozawa,
{\it GAKUTO Internat. Ser. Math. Sci. Appl.}, Gakk{\=o}tosho, Tokyo, 1997,  85--133.

\bibitem{GiVe}
Ginibre J., Velo G.,
Time decay of f\/inite energy solutions of the nonlinear Klein--Gordon and Schr{\"o}dinger equations,
{\it Ann. Inst. H. Poincar{\'e} Phys. Th{\'e}or.} {\bf 43} (1985), 399--442.

\bibitem{GMS}
Gopakumar R., Minwalla S., Strominger A.,
Noncommutative solitons,
\href{http://dx.doi.org/10.1088/1126-6708/2000/05/020}{{\it J. High Energy Phys.}} {\bf 2000}, (2000), no.~5, 020, 27~pages,
\href{http://arxiv.org/abs/hep-th/0003160}{hep-th/0003160}.

\bibitem{GSS}
Grillakis M., Shatah J., Strauss W.,
Stability theory of solitary waves in the presence of symmetry.~I,
\href{http://dx.doi.org/10.1016/0022-1236(87)90044-9}{{\it J.~Funct. Anal.}} {\bf 74} (1987), 160--197.

\bibitem{HKL}
Harvey J., Kraus P., Larsen F.,
Exact noncommutative solitons,
 \href{http://dx.doi.org/10.1088/1126-6708/2000/12/024}{{\it J. High Energy Phys.}} {\bf 2000}, (2000), no.~12, 024, 24~pages,
 \href{http://arxiv.org/abs/hep-th/0010060}{hep-th/0010060}.

\bibitem{Krasikov}
Krasikov I.,
An upper bound on Jacobi polynomials,
\href{http://dx.doi.org/10.1016/j.jat.2007.04.008}{{\it J. Approx. Theory}} \textbf{149} (2007), 116--130,
\href{http://arxiv.org/abs/math.CA/0610111}{math.CA/0610111}.

\bibitem{Reed}
Reed M.,
Abstract non-linear wave equations,
{\it Lecture Notes in Mathematics}, Vol.~507, Springer-Verlag,
  Berlin~-- New York, 1976.

\bibitem{RS}
Reed  M., Simon B.,
Methods of modern mathematical physics, Vol.~II, Academic Press, New York~-- London, 1975.

\bibitem{ShStrauss}
Shatah J., Strauss W.A.,
Instability of nonlinear bound states,
\href{http://dx.doi.org/10.1007/BF01212446}{{\it Comm. Math. Phys.}} \textbf{100} (1985), 173--190.


\bibitem{Szego}
Szeg\"o G.,
Orthogonal polynomials,  {\it American Mathematical Society Colloquium Publications}, Vol.~23, Amer. Math. Soc., Providence, RI, 1959.

\bibitem{Strauss} 
Strauss W.A.,
Nonlinear wave equations,
{\it CBMS Regional Conference Series in Mathematics}, Vol.~73, Amer. Math. Soc., Providence, RI, 1989.

\bibitem{Raimar} Wulkenhaar R., Renormalisation of noncommutative $\varphi_4^4$-theory to all orders, Habilitation Thesis, University of Vienna, 2003, available at \url{http://www.math.uni-muenster.de/u/raimar/}.

\end{thebibliography}
\end{document}